\DeclareMathOperator{\Det}{det}
\DeclareMathOperator{\spn}{span}
\newcommand*{\D}{\mathcal{D}}
\newcommand*{\e}{\mathrm{e}}
\algrenewcommand\algorithmicrequire{\textbf{Input:}}
\algrenewcommand\algorithmicensure{\textbf{Output:}}
\let\csname equation*\endcsname\relax
\let\csname endequation*\endcsname\relax
\newtheorem{thm}{Theorem}
\newtheorem{lem}[thm]{Lemma}
\newtheorem{defn}[thm]{Definition}
\newcommand*{\defeq}{\mathrel{\vcenter{\baselineskip0.5ex \lineskiplimit0pt
      \hbox{\scriptsize.}\hbox{\scriptsize.}}}%
      =}
\begin{document}


\title[Algorithms for SU$(n)$ boson realizations and $\D$-functions]{Algorithms for SU$(n)$ boson realizations and $\D$-functions}

\author{Ish Dhand}
\email{ishdhand@gmail.com}
 \homepage{http://ishdhand.me/}
 \affiliation{Institute for Quantum Science and Technology, University of Calgary, Alberta T2N~1N4, Canada}

\author{Barry C.~Sanders}%
 \affiliation{Institute for Quantum Science and Technology, University of Calgary, Alberta T2N~1N4, Canada}
 \affiliation{Hefei National Laboratory for Physical Sciences at Microscale and Department of Modern Physics, University of Science and Technology of China, Hefei, Anhui 230026, China}
 \affiliation{CAS Center for Excellence and Synergetic Innovation Center in Quantum Information and Quantum Physics, University of Science and Technology of China, Shanghai 201315, China}
 \affiliation{Program in Quantum Information Science, Canadian Institute for Advanced Research, Toronto, Ontario M5G~1Z8, Canada}

\author{Hubert de Guise}
 \affiliation{Department of Physics, Lakehead University, Thunder Bay, Ontario P7B~5E1, Canada}

\date{\today}

\begin{abstract}
Boson realizations map operators and states of groups to transformations and states of bosonic systems. 
We devise a graph-theoretic algorithm to construct the boson realizations of the canonical SU$(n)$ basis states, which reduce the canonical subgroup chain, for arbitrary $n$. 
The boson realizations are employed to construct $\D$-functions, which are the matrix elements of arbitrary irreducible representations, of SU$(n)$ in the canonical basis.
We demonstrate that our $\D$-function algorithm offers significant advantage over the two competing procedures, namely factorization and exponentiation. 
\end{abstract}

\keywords{Special unitary group, irreducible representation, boson realization}
\maketitle

\section{Introduction}

$\mathcal{D}$-functions of a group element are the entries of irreducible matrix representations (irreps) of the element.
$\mathcal{D}$-functions of the special unitary group SU$(2)$ are important in nuclear, atomic, molecular and optical physics~\cite{Varvsalovivc1989,Rose1995,Edmonds1996,Racah1943,Jacob1959,Alder1956,Moshinsky1962}.
SU$(1,1)$ is the prototypical non-compact semi-simple Lie group, and its $\mathcal{D}$-functions appear in connection with Bogolyubov transformations, squeezing and parametric downconversion~\cite{Ui1970,Yurke1986}.
Methods for construction of intelligent states and the analysis of cylindrical Laguerre-Gauss beams employ $\mathcal{D}$-functions of SU$(1,1)$~\cite{Joanis2010,Karimi2014}.
$\D$-functions of other Lie groups enable exact solutions to problems in quantum optics~\cite{Cervero1996,Wunsche2000}.

One recent application of $\D$-functions of SU$(n)$ for arbitrary $n$ is to the BosonSampling problem, which deals with SU$(n)$ transformations acting on indistinguishable single-photon pulse inputs~\cite{Aaronson2013a,Aaronson2013}.
Within the framework of BosonSampling and of multi-photon interferometry in general, $\D$-functions provide a deeper understanding of the permutation symmetries between the interfering photons.
For instance, SU$(3)$ $\D$-functions enable a symmetry-based interpretation of the action of a three-channel linear interferometer on partially-distinguishable single-photon inputs~\cite{Tan2013,Tillmann2014}.
Exploiting the permutation symmetries present in multi-photon systems reduces the cost of computing interferometer outputs in comparison to brute-force techniques~\cite{Guise2014}.

Two existing procedures for computing SU$(n)$ $\D$-function are based on factorization and on exponentiation.
Both procedures have drawbacks, which we describe as follows.
Factorization-based methods, which compute SU$(n)$ $\mathcal{D}$-functions in terms of $\D$-functions of subtransformations, are well developed for groups of low rank~\cite{Vilenkine1968,Miller1968,Talman1968,Chacon1966,Rowe1999}.
However, generalizing these algorithms to higher $n$ requires SU$(n-1)$ coupling and recoupling coefficients, which have limited availability for $n>3$, i.e., restricted to certain subgroups of SU$(3)$~\cite{Draayer1973,Millener1978,Rowe2000}.
Hence, methods for $\D$-functions of higher groups are underdeveloped despite the application of their corresponding algebras to diverse problems~\cite{Beg1965,Slansky1981,Georgi1999,Rowe2010}. 

Another approach for computing SU$(n)$ $\D$-functions involves exponentiating and composing the matrix representations of the algebra~\cite{Cornwell1997,Gilmore2012}.
This approach has three hurdles.
For one, this method requires knowledge of all the matrix elements of each generator to be exponentiated.
Certain applications require closed-form expressions of $\D$-functions in terms of elements of the fundamental representation; exponentiation-based methods are infeasible for these applications because of the difficulty of exponentiating matrices analytically, especially for $n>5$. 
Furthermore, if only a limited number of $\D$-functions are required, exponentiation is wasteful because it computes the entire set of $\D$-functions.

We overcome the shortcomings of these algorithms by utilizing boson realizations, which map the algebra and its carrier space to bosonic operators and spaces respectively.
Boson realizations arise naturally when considering the groups Sp($2n,\mathds{R}$), SU($n$) and some of their subgroups.
For instance, SU$(1,1)$, SU$(2)$ and SU$(3)$ boson realizations are used to study degeneracies, symmetries and dynamics in quantum systems~\cite{Jauch1940,Schwinger1952,Baker1956,Elliott1958,Fradkin1965,Iachello1987,Klein1991,Kuriyama2000,Bartlett2001}.
A wide class of problems in theoretical physics rely on boson realizations of the symplectic group~\cite{Hwa1966,Kramer1966,Moshinsky1971,Quesne1971,Rowe1984}.

Here we aim to devise an algorithm to construct the $\D$-functions of arbitrary representations of SU$(n)$ for arbitrary $n$. 
We approach the problem of limited availability of SU$(n)$ $\D$-functions~\cite{Weyl1950,Chaturvedi2006} by presenting
(i) a mapping of the weights of an irrep to a graph, (ii) a graph-theoretic algorithm to compute boson realizations of the canonical basis states of SU$(n)$ for arbitrary $n$ (Algorithm~\ref{Algorithm:Main} in Subsection~\ref{Subsec:BasisStates}) and (iii) an algorithm that employs the constructed boson realizations to compute expressions for $\D$-functions as polynomials in the matrix elements of the defining representation (Algorithm~\ref{Algorithm:D} in Subsection~\ref{Subsec:D}).

The rest of the paper is structured as follows.
Section~\ref{Section:Definitions} includes definitions of the SU$(n)$ operators and basis states.
In Section~\ref{Section:BosonRealizations}, we define SU$(n)$ boson realizations and illustrate the calculations of SU$(2)$ $\D$-functions using SU$(2)$ boson realizations. 
Section~\ref{Section:Algorithms} details our algorithms for boson realizations of SU$(n)$ basis states and for the SU$(n)$ $\D$-functions.
We discuss potential generalizations of our algorithms in Section~\ref{Section:Conclusion}.

\section{Background: The special unitary group and its algebra}
\label{Section:Definitions}
In this section, we recall the relevant properties of special-unitary group SU$(n)$ and its algebra $\mathfrak{su}(n)$.
We explain how the $\mathfrak{su}(n) \supset \mathfrak{su}(n-1) \supset \dots \supset \mathfrak{su}(2)$ subalgebra chain is used to label the basis states of the unitary irreps of SU$(n)$.
We present the background for $n =2$ in Subsection~\ref{Subsection:SU2} before dealing with SU$(n)$ for arbitrary $n$ in Subsection~\ref{Subection:SUn}.

\subsection{SU$(2)$ operators and basis states}
\label{Subsection:SU2}
Consider the special unitary group 
\begin{equation}
\mathrm{SU}(2) = \{V: V\in GL(2,\mathds{C}),\,V^\dagger V = \mathds{1},\, \Det{V} = 1\}
\end{equation}
of $2\times 2$ special unitary matrices. 
Each element of SU$(2)$ can be parametrized by three angles $\Omega=(\alpha,\beta,\gamma)$.
The defining $2\times 2$ representation of an element $V(\Omega)$ of SU$(2)$ is given by
\begin{align}
\mathrm{V}(\Omega)
&=
\begin{pmatrix}
	\text{e}^{-\frac{1}{2}\text{i}(\alpha +\gamma )} \cos \frac{\beta}{2} & -\text{e}^{-\frac{1}{2}\text{i}(\alpha -\gamma )} \sin \frac{\beta}{2} \\
	\text{e}^{\frac{1}{2}\text{i}(\alpha -\gamma )} \sin \frac{\beta}{2} & \text{e}^{\frac{1}{2}\text{i}(\alpha +\gamma )} \cos \frac{\beta}{2}
\end{pmatrix}.
\label{Eq:2x2Rmatrix}
\end{align}
The Lie algebra corresponding to group SU$(2)$ is denoted by $\mathfrak{su}(2)$ and is spanned by the operators $J_x,J_y,J_z$, which satisfy the angular momentum commutation relations 
\begin{equation}
[J_x,J_y] = iJ_z\, ,\quad
[J_y,J_z] = iJ_x\, ,\quad
[J_z,J_x] = iJ_y.
\label{Eq:Jxyz}
\end{equation}

We transform the basis~(\ref{Eq:Jxyz}) of $\mathfrak{su}(2)$ to the complex combinations
\begin{equation}
C_{1,2} = {J_x + i J_y}\, ,\qquad
C_{2,1} = {J_x - i J_y}\, ,\qquad
H_1 = 2J_z,
\label{Eq:PlusMinusZ}
\end{equation}
which satisfy the commutation relations
\begin{equation}
[H_1,C_{1,2}] = 2C_{1,2} \, ,\quad
[H_1,C_{2,1}] = -2C_{2,1}\, ,\quad
[C_{1,2},C_{2,1}] = H_1.
\label{Eq:SU2RaisingLoweringCommutations}
\end{equation}
These commutation relations~(\ref{Eq:SU2RaisingLoweringCommutations}) facilitate the construction of a $(2J+1)$-dimensional irrep with carrier space spanned by basis states $\{\ket{J,M}:\, -J\le M\le J\}$~\cite{Littlewood1950}. 
The integer $2M$ is the weight of the eigenstate $\ket{J,M}$ for
\begin{equation}
H_1\ket{J,M} = 2M\ket{J,M}.
\end{equation}
The operators $C_{1,2}$ and $C_{2,1}$ act on eigenstates of $H_1$ by raising or lowering the weight $2M$ of the states
\begin{align}
C_{1,2} \ket{J,M} = \sqrt{J(J+1)-M(M+1)}\ket{J,M+1},\\
C_{2,1} \ket{J,M} = \sqrt{J(J+1)-M(M-1)}\ket{J,M-1},
\end{align}
where $2J$ is the highest eigenvalue of $H_1$.

Each basis state of a finite-dimensional irrep of SU$(2)$ is labelled by integral weight $2M\in\{-2J,-2J+2,\dots,2J-2,2J\}$. 
The unique basis state $\ket{J,J}$ is called the highest-weight state (hws) and is annihilated by the action of the raising operator $C_{1,2}$.
The representation is labelled by the largest eigenvalue $2J$ of $H_1$.
Basis states of an SU$(2)$ irrep are visualized as collections of points on a line with the location of each point related to the weight of the state. 
Figure~\ref{Figure:SU2} gives a geometrical representation of the action of $\mathfrak{su}(2)$ operators and illustrative examples of SU$(2)$ irreps.


\begin{figure}	
    \centering
    \subfloat[]{\includegraphics[width=2.015in]{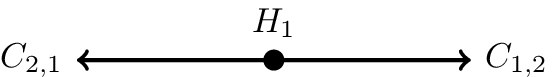}}%
    \qquad
    \subfloat[]{\includegraphics[width=3.15in]{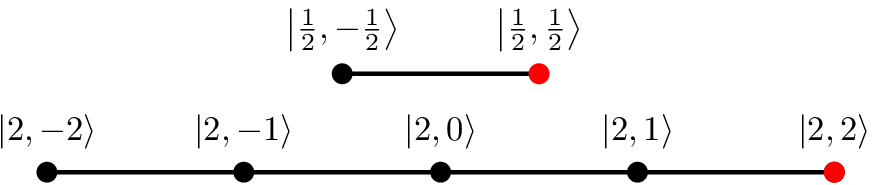}}%
\caption{(a) Generators of the $\mathfrak{su}(2)$ Algebra. 
The action of the raising and lowering operators $C_{1,2}, C_{2,1}$ on the basis states is represented by the directed lines. 
The basis states are invariant under the action of the Cartan operator $H_1$, which is represented by the dot at the centre. 
(b) SU$(2)$ irreps labelled by highest weights $2M=1$ and $2M=4$ respectively. 
The dots represent the basis states while the lines connecting the dots represent the transformation from one basis state to another by the action of the $\mathfrak{su}(2)$ raising and lowering operators. 
The red dot represents the hws, which is annihilated by the action of the raising operator $C_{1,2}$.}\label{Figure:SU2}
\end{figure}

\subsection{Basis states and $\D$-functions of SU$(n)$ for arbitrary $n$}
\label{Subection:SUn}
Next we consider the case of arbitrary $n$.
The unitary group U$(n)$ is the Lie group of $n\times n$ unitary matrices 
\begin{equation}
\mathrm{U}(n) \defeq \{V\colon \,V\in \mathrm{GL}(n,\mathds{C}), V^\dagger V = \mathds{1}\}.
\end{equation}
The corresponding Lie algebra is denoted by $\mathfrak{u}(n)$.
The complex extension of $\mathfrak{u}(n)$ is spanned by $n^2$ operators $\{C_{i,j} \colon i,j \in 1,2, \dots n\}$ 
satisfying the canonical commutation relations
\begin{equation}
 [C_{i,j},C_{k,l}] = \delta_{j,k}C_{i,l} - \delta_{i,l}C_{k,j}.
 \label{Eq:Commutation}
\end{equation}
The group SU$(n)$ is the subgroup of those U$(n)$ transformations that satisfy the additional property $\Det V=1$; i.e., 
\begin{equation}
\label{Eq:Determinant}
\mathrm{SU}(n) \defeq \{V\colon V \in \mathrm{U}(n),\Det V = 1\}.
\end{equation}
The U$(n)$ $\mathcal{D}$-functions differ from the SU$(n)$ 
$\mathcal{D}$-functions by at most
a phase, and we concentrate here on the SU$(n)$ case.

The operator $N= C_{1,1}+C_{2,2}+\dots + C_{n,n}$ is in the centre%
\footnote{The centre of an algebra $\mathfrak{u}$ comprises those elements $x$ of $\mathfrak{u}$ such that $xu = ux$ for all $u\in\mathfrak{u}$.} 
of $\mathfrak{u}(n)$.
The Lie algebra $\mathfrak{su}(n)$ is obtained from $\mathfrak{u}(n)$ by eliminating the operator $N$. 
The $n-1$ operators 
\begin{equation}
H_i = C_{i,i} - C_{i+1,i+1} \quad \forall i \in \{1,2,\dots,n-1\}
\label{Eq:Cartan}
\end{equation}
commute with each other and span the Cartan subalgebra of $\mathfrak{su}(n)$.
Hence, we have the following definition of the $\mathfrak{su}(n)$ algebra.
\begin{defn}[$\mathfrak{su}(n)$ algebra~\cite{Littlewood1950}]
The algebra $\mathfrak{su}(n)$ is the span of the operators $\{C_{i,j} \colon i,j \in \{1,2, \dots, n\},\, i\ne j\}$ and $\{H_i: H_i = C_{i,i} - C_{i+1,i+1},\, i \in \{1,2,\dots,n-1\}\}$ where the operators $\{C_{i,j}\}$ obey the commutation relations
\begin{equation}
[C_{i,j},C_{k,l}] = \delta_{j,k}C_{i,l} - \delta_{i,l}C_{k,j}.
\end{equation}
\label{Defn:Algebra}
\end{defn}
\noindent The linearly independent (LI) $\mathfrak{su}(n)$ basis states span the carrier space of $\mathfrak{su}(n)$ representations.
Each basis state is associated with a weight, which is the set of integral eigenvalues of the Cartan operators.
\begin{defn}[Weight of $\mathfrak{su}(n)$ basis states~\cite{Littlewood1950}]
The weight of a basis state is the set $\Lambda = (\lambda_1,\lambda_2,\dots,\lambda_{n-1})$ of $n-1$ integral eigenvalues of the Cartan operators $\{H_1,H_2,\dots,H_{n-1}\}$.
$\mathfrak{su}(n)$ basis states have well defined weights.
\end{defn}

Of the $n^2-1$ elements, $n-1$ Cartan operators generate the maximal Abelian subalgebra of $\mathfrak{su}(n)$.
The remaining operators satisfy the commutation relation
\begin{equation}
[H_{i},C_{j,k}] =
\begin{cases}
\beta_{i,jk} C_{j,k}, &\forall j < k,\\
-\beta_{i,jk} C_{j,k}, &\forall j > k,
\end{cases}
\end{equation}
for Cartan operators $H_i$ of Definition~\ref{Defn:Algebra} and for positive integral roots $\beta_{i,jk}$.
The operators $\{C_{j,k}\colon j<k\}$ define a set of raising operators.
The remaining off-diagonal operators $\{C_{j,k}\colon j>k\}$ are the $\mathfrak{su}(n)$ lowering operators.
Each irrep contains a unique state that has nonnegative integral weights $K = (\kappa_1,\dots,\kappa_{n-1})$ and is annihilated by all raising operators.
This state is the hws of the irrep.
\begin{defn}[Highest-weight state] 
The hws of an SU$(n)$ irrep is the unique state that is annihilated according to 
\begin{equation}
C_{i,j}\ket{\psi^{K}_\text{hws}} = 0 \quad \forall i<j,\, i,j \in \{1,2,\dots,n\} 
\end{equation}
by the action of all the raising operators.
\end{defn}
\noindent 
The weight of the hws also labels the irrep; i.e., two irreps with the same highest weight are equivalent and two equivalent representations have the same highest weight.
Hence, we label an irrep by $K = (\kappa_1,\kappa_2,\dots,\kappa_{n-1})$ if the hws of the irrep has weight $\Lambda = K$.

Whereas in SU$(2)$ the weight $2M$ and the representation label $J$ are enough to uniquely identify a state in the representation, this is not so for SU$(n)$ representations. 
In general, more than one SU$(n)$ basis state of an irrep could share the same weight. 
For example, certain states of the $K= (2,2)$ irrep of SU$(3)$ irrep have the same weight~(Fig.~\ref{Figure:SU3}). 
The number of basis states that share the same SU$(n)$ weight $\Lambda = (\lambda_1,\lambda_2,\dots,\lambda_{n-1})$ is the multiplicity $M({\Lambda})$ of the weight~\cite{Kostant1959}.
Hence, uniquely labelling the SU$(n)$ basis states requires a scheme to lift the possible degeneracy of weights.

\begin{figure}	
    \centering
    \subfloat[]{\includegraphics[width=2.015in]{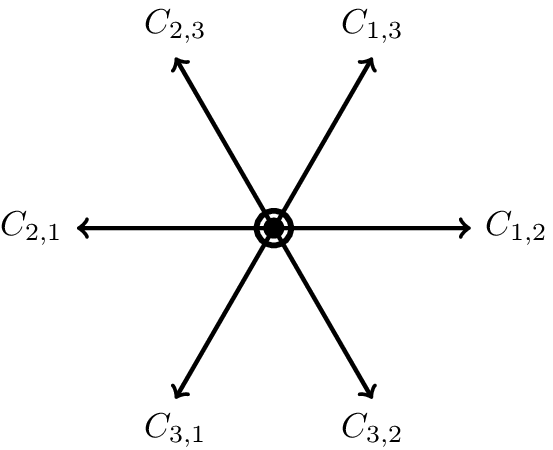}}%
    \qquad
    \subfloat[]{\includegraphics[width=3.51in]{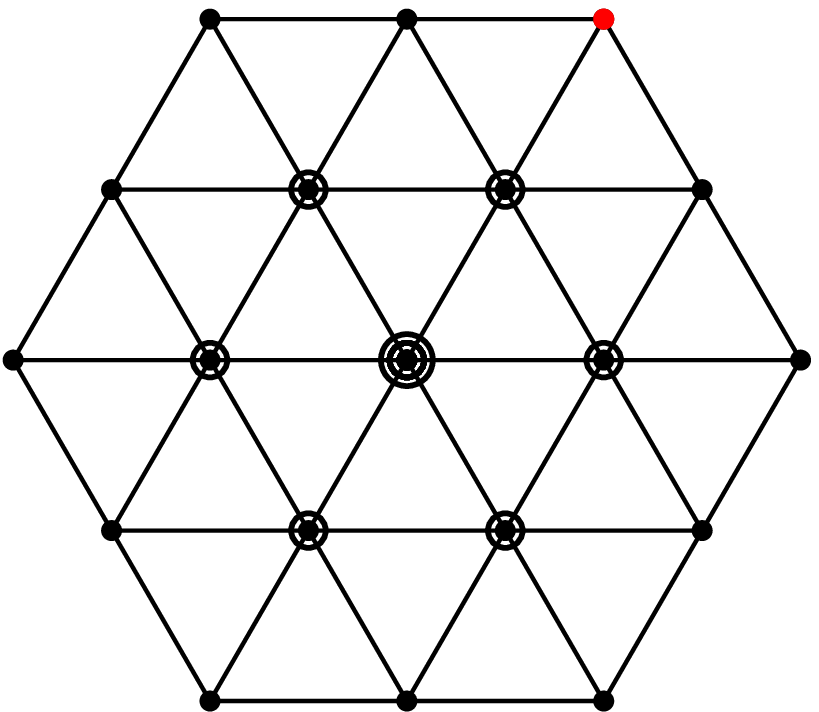}}%
\caption{(a) Generators of the $\mathfrak{su}(3)$ algebra. 
The action of the raising operators $\{C_{1,2}, C_{1,3}, C_{2,3}\}$ and lowering operators $\{C_{2,1},C_{1,3},C_{2,3}\}$ on the canonical basis states and their linear combinations is represented by the directed lines. 
(b) The SU$(3)$ irrep labelled by its highest weight $(\kappa_1,\kappa_2) = (2,2)$. 
The dots and circles represent the canonical basis states.
The dimension of the space of states at a given vertex is the sum of the number of dots and the number of circles at the vertex, for instance weights associated with dimension two are represented by one dot and one circle.
The lines connecting the dots represent the transformation from states of one weight to those of another by the action of SU$(3)$ raising and lowering operators. 
The red dot represents the highest weight of the irrep.
A unique hws occupying this weight is annihilated by the action of each of the raising operator.}
\label{Figure:SU3}
\end{figure}

One approach to labelling the SU$(n)$ basis states involves specifying the transformation properties under the action of the subalgebras of $\mathfrak{su}(n)$. 
We restrict our attention to the canonical subalgebra chain 
\begin{equation}
\mathfrak{su}_{1,2,\dots, n}(n)\supset \mathfrak{su}_{1,2,\dots, n-1}(n-1) \supset \dots \supset \mathfrak{su}_{1,2}(2),
\label{Eq:SubalgebraChain}
\end{equation}
where $\mathfrak{su}_{1,2,\dots, m}(m)$ is the subalgebra generated by the operators $\{C_{i,j}\colon i,j\in\{1,2,\dots,m\}\,,i\ne j\}$ and $\{H_k\colon k\in\{1,2,\dots,m-1\}\}$. 
Details about the choice of subalgebra chain are presented in~\ref{Appendix:SubAlgebraChoice}.
Henceforth, we drop the subscript and denote $\mathfrak{su}_{1,2,\dots, m}(m)$ by $\mathfrak{su}(m)$.

The canonical basis comprises the eigenstates of the $\mathfrak{su}(m)$ generators for all $m\le n$ according to the following definition.
\begin{defn}(Canonical basis states)
\label{Definition:CanonicalBasisStates}
The canonical basis states of SU$(n)$ irrep $K^{(n)}$ are those states
\begin{equation}
\Big|\tensor*{\psi}{*^{K^{(n)}}_{\Lambda^{(n)}}^{,\dots,}_{,\dots,}^{K^{(3)},}_{\Lambda^{(3)},}^{K^{(2)}}_{\Lambda^{(2)}}}\Big\rangle
\end{equation}
that have well defined values of 
\begin{enumerate}
\item
irrep labels $K^{(m)}$ for $\mathfrak{su}(m)$ algebras for all $\{m:2\le m\le n\}$ and
\item
$\mathfrak{su}(m)$ weights $\Lambda^{(m)}$, i.e., eigenvalues of the Cartan operators of $\mathfrak{su}(m)$ algebras for all $\{m:2\le m\le n\}$.
\end{enumerate}
\end{defn}
\noindent
Consider the example of the $(\kappa_1,\kappa_2) = (1,1)$ irrep of SU$(3)$.
There are two basis states with the weight $(\lambda_1,\lambda_2) = (0,0)$.
We can identify these two states by specifying
\begin{enumerate}
\item
the $\mathfrak{su}(3)$ irrep label $K^{(3)} = (\kappa_1,\kappa_2) = (1,1)$ and the $\mathfrak{su}(2)$ irrep label $K^{(2)} = (\kappa_1) = (0)$ or $K^{(2)} = (\kappa_1) = (1)$.
\item
the $\mathfrak{su}(3)$ weights $\Lambda^{(3)} = (\lambda_1,\lambda_2) = (0,0)$ and $ \mathfrak{su}(2)$ weight $\Lambda^{(2)} = (\lambda_1) = (0)$.
\end{enumerate}
The connection between our labelling of canonical basis states of Definition~\ref{Definition:CanonicalBasisStates} and the Gelfand-Tsetlin patterns~\cite{Gelfand1988} is detailed in~\ref{Appendix:Connection}.
The canonical basis state $\Big|\tensor*{\psi}{*^{K^{(n)}}_{\Lambda^{(n)}}^{,\dots,}_{,\dots,}^{K^{(3)},}_{\Lambda^{(3)},}^{K^{(2)}}_{\Lambda^{(2)}}}\Big\rangle
$ for which $K^{(m)} = \Lambda^{(m)}$ for all $m\in\{2,\dots,n\}$ is the highest weight of the irrep $K^{(n)}$.

The relative phases between the canonical basis states are fixed by comparing with the phase of the hws~\cite{Gelfand1988}.
Matrix elements of the simple raising operators $C_{\ell,\ell+1},\, \ell \in\{1,\dots,n-1\}$ are set as positive~\cite{Barut1986}.
Thus, we impose the following additional constraint on the canonical basis states
\begin{equation}
\mel**{\psi_\mathrm{hws}}{c_{1,2}^{p_{1,2}} c_{2,3}^{p_{2,3}}\cdots c_{n-1,n}^{p_{n-1,n}}}{\tensor*{\psi}{*^{K^{(n)}}_{\Lambda^{(n)}}^{,\dots,}_{,\dots,}^{K^{(3)},}_{\Lambda^{(3)},}^{K^{(2)}}_{\Lambda^{(2)}}}} \ge 0, 
\label{Eq:PhaseConvention}
\end{equation}
for all canonical basis states, for positive integers $p_{\ell,\ell+1}$.


$\D$-functions are the matrix elements of SU$(n)$ irreps. 
The rows and columns of SU$(n)$ matrix representations are labelled by SU$(n)$ basis states.
The expression for SU$(n)$ $\D$-functions generalize those of the SU$(2)$ $\D$-functions~(\ref{Eq:D}) with $M,M^\prime$ replaced by suitable labels for weights and $J$ replaced by suitable subalgebra labels.
\begin{defn}[$\D$-functions]
$\D$-functions of an SU$(n)$ transformation $V(\Omega)$ are 
\begin{equation}
\tensor*{\D}{*^{K^{(n)}}_{\Lambda^{(n)}}^{,\dots,}_{,\dots,}^{K^{(3)},}_{\Lambda^{(3)},}^{K^{(2)}}_{\Lambda^{(2)}}^;_;^{K^{\prime(n)}}_{\Lambda^{\prime(n)}}^{,\dots,}_{,\dots,}^{K^{\prime(3)},}_{\Lambda^{\prime(3)},}^{K^{\prime(2)}}_{\Lambda^{\prime(2)}}}
(\Omega) 
\defeq \Big\langle{\tensor*{\psi}{*^{K^{(n)}}_{\Lambda^{(n)}}^{,\dots,}_{,\dots,}^{K^{(3)},}_{\Lambda^{(3)},}^{K^{(2)}}_{\Lambda^{(2)}}}}\Big|V(\Omega)\Big|{\tensor*{\psi}{*^{K^{\prime(n)}}_{\Lambda^{\prime(n)}}^{,\dots,}_{,\dots,}^{K^{\prime(3)},}_{\Lambda^{\prime(3)},}^{K^{\prime(2)}}_{\Lambda^{\prime(2)}}}\Big\rangle
},
\label{Eq:Defined}
\end{equation}
where $\Omega = \{\omega_1,\omega_2,\dots,\omega_{n^2-1}\}$ is the set of $n^2-1$ independent angles that parameterize an SU$(n)$ transformation~\cite{Reck1994}. 
\end{defn}
\noindent Note that SU$(n)$ $\D$-functions~(\ref{Eq:Defined}) are non-zero only if the left and the right states belong to the same SU$(n)$ irrep, i.e., 
\begin{equation}
K^{(n)} \ne K^{\prime(n)}\implies \,\tensor*{\D}{*^{K^{(n)}}_{\Lambda^{(n)}}^{,\dots,}_{,\dots,}^{K^{(3)},}_{\Lambda^{(3)},}^{K^{(2)}}_{\Lambda^{(2)}}^;_;^{K^{\prime(n)}}_{\Lambda^{\prime(n)}}^{,\dots,}_{,\dots,}^{K^{\prime(3)},}_{\Lambda^{\prime(3)},}^{K^{\prime(2)}}_{\Lambda^{\prime(2)}}}
(\Omega) = 0.
\end{equation} 
$\D$-functions of an irrep $K$ refer to those $\D$-functions for which $K^{(n)} = K^{\prime(n)} = K$.

We approach the task of constructing SU$(n)$ $\D$-functions by using boson realizations of SU$(n)$ states.
In the next section, we define boson realizations and illustrate the construction of SU$(2)$ $\D$-functions using SU$(2)$ boson realizations.

\section{Background: boson realizations of SU$(n)$}
\label{Section:BosonRealizations}
In this section, we describe boson realizations, which map $\mathfrak{su}(n)$ operators and carrier-space states to operators and states of a system of $n-1$ species of bosons on $n$ sites respectively.
We first present the mapping for $n=2$ and illustrate SU$(2)$ $\D$-functions calculation using the SU$(2)$ boson realization in Subsection~\ref{Subsec:SU2BR}.
Boson realizations of SU$(n)$ for arbitrary $n$ are defined is Subsection~\ref{Subsec:SUnBR}.

\subsection{SU$(2)$ boson realizations}
\label{Subsec:SU2BR}
The commutation relations~(\ref{Eq:SU2RaisingLoweringCommutations}) of $\{C_{1,2},C_{2,1},H_1\}$ are reproduced by number-preserving bilinear products of creation and annihilation operators that act on a two-site bosonic system.
Specifically, the $\mathfrak{su}(2)$ operators have the boson realization
\begin{equation}
 C_{1,2} \mapsto c_{1,2} \defeq a_1^\dagger a_2\, ,\quad
C_{2,1} \mapsto c_{2,1} \defeq a_2^\dagger a_1 \, ,\quad
H_1 \mapsto h_1 \defeq a_1^\dagger a_1 - a_2^\dagger a_2,
\label{Eq:su2bosonmap}
\end{equation}
where the bosonic creation and annihilation operators obey the commutation relations
\begin{equation}
\left[a_i,a_j^\dagger\right] = \delta_{ij}\mathds{1}, \quad \left[a_i,a_j\right] = \left[a_i^\dagger,a_j^\dagger\right] = 0.
\label{Eq:ccr}
\end{equation}
Here and henceforth, we use lower-case symbols for boson realizations of the respective upper-case symbols.
Explicitly, 
\begin{equation}
[h_1,c_{1,2}] = 2c_{1,2}\,\qquad
[h_1,c_{2,1}] = - 2c_{2,1}\,\qquad
[c_{1,2},c_{2,1}] = h_1.
\end{equation}
The operators $\{c_{1,2},c_{2,1},h_1\}$ also span the complex extension of the $\mathfrak{su}(2)$ Lie algebra.

Boson realizations map the states in the carrier space of SU$(2)$ to the states of a two-site bosonic system.
Specifically, each basis state of the $(2J+1)$-dimensional SU$(2)$ irrep maps 
\begin{equation}
\ket{J,M} \mapsto \frac{(a_1^\dagger)^{J+M} (a_2^\dagger)^{J-M} }{\sqrt{(J+M)!(J-M)!}}
\ket{0}
\label{Eq:su2polynomialstate}
\end{equation}
to the state of a two-site system with $J+M$ and $J-M$ bosons in the two sites respectively.

The $(2J+1)$-dimensional irreps of SU$(2)$ map to number-preserving transformations on a two-site system of $2J$ bosons in the basis of Eq.~(\ref{Eq:su2polynomialstate}).
The elements of these $(2J+1)\times(2J+1)$ matrices are the SU$(2)$ $\D$-functions
\begin{equation}
D^J_{M^\prime M}(\Omega)
		\defeq \bra{J,M^\prime}V(\Omega)\ket{J,M}
\label{Eq:D}
\end{equation}
for irrep $J$ and row and column indices $M^\prime, M$.
The expression for $\D$-functions~(\ref{Eq:D}) of SU$(2)$ element $V(\Omega)$ can be calculated by noting that the creation operators transform under the action of $V$ of Eq.~(\ref{Eq:2x2Rmatrix}) according to
\begin{align}
a_1^\dagger\to V_{11}a_1^\dagger + V_{12}a_2^\dagger, \nonumber \\
a_2^\dagger\to V_{21}a_1^\dagger + V_{22}a_2^\dagger,
\end{align}
where $V$ is the $2\times 2$ fundamental representation of $V(\Omega)$.
The state $\ket{J,M}$~(\ref{Eq:su2polynomialstate}) thus transforms to
\begin{equation}
\ket{J,M}\to \frac{\left(V_{11}a_1^\dagger + V_{12}a_2^\dagger\right)^{J+M}\left(V_{21}a_1^\dagger + V_{22}a_2^\dagger\right)^{J-M}}
{\sqrt{(J+M)!(J-M)!}}\ket{0}\,
\label{Eq:transformedsu2polynomialstate} 
\end{equation}
as the vacuum state $\ket{0}$ is invariant under the action $V$.
Using Eqs.~(\ref{Eq:su2polynomialstate}) and (\ref{Eq:transformedsu2polynomialstate}), we obtain
\begin{equation}
D^J_{M^\prime M}(\Omega) =\Bigg\langle0\Bigg|\frac{a_1^{J+M'}a_2^{J-M'}\left(V_{11}a_1^\dagger + V_{12}a_2^\dagger\right)^{J+M}\left(V_{21}a_1^\dagger + V_{22}a_2^\dagger\right)^{J-M}}{\sqrt{(J+M')!(J-M')!}\sqrt{(J+M)!(J-M)!}}\Bigg|0\Bigg\rangle,
\label{Eq:DFunction}
\end{equation}
which can be evaluated using the commutation relations of the creation and annihilation operators~(\ref{Eq:ccr}).%
\footnote{
A useful computational shortcut involves the map
\begin{equation}
a_k^\dagger\to x_k\, ,\quad a_\ell\to \frac{\partial}{\partial x_\ell}\,,\quad k,\ell \in\{1,2\},
\label{Eq:PolynomialMap}
\end{equation}
which preserves the boson commutation relations.
The map~(\ref{Eq:PolynomialMap}) transforms the vector~$\ket{J,M}$~(\ref{Eq:su2polynomialstate}) into a formal polynomial and the corresponding dual vector~$\bra{J,M}$ into a linear differential operator in the dummy variables $x_1,x_2$.
The $\D$-function~(\ref{Eq:DFunction}) is thus evaluated as the action of a linear differential operator on a polynomial in $x_1,x_2$.
}

In this paper, our objective is to generalize Eqs.~(\ref{Eq:su2bosonmap}) and (\ref{Eq:su2polynomialstate}) systematically from $n = 2$ to arbitrary $n$.
In the next subsection, we define boson realizations of operators and carrier-space states of $\mathfrak{su}(n)$.
Furthermore, we construct the boson realization for the hws of arbitrary SU$(n)$ irreps.

\subsection{SU$(n)$ boson realizations for arbitrary $n$}
\label{Subsec:SUnBR}
SU$(n)$ boson realizations map SU$(n)$ states $\Big|{\tensor*{\psi}{*^{K^{(n)}}_{\Lambda^{(n)}}^{,\dots,}_{,\dots,}^{K^{(3)},}_{\Lambda^{(3)},}^{K^{(2)}}_{\Lambda^{(2)}}}}\Big\rangle$ and $\mathfrak{su}(n)$ operators to states and operators of a system of bosons on $n$ sites. 
Bosons are labelled based on the site $i\in \{1,2,\dots,n\}$ at which they are situated and by an internal degree of freedom, which is denoted by an additional subscript on the bosonic operators. 
The bosonic creation and annihilation operators on this system are
\begin{align}
&~\Big\{a^\dagger_{i,j}\colon i\in \{1,2,\dots,n\}, j \in \{1,2,\dots,n-1\} \Big\} \quad \text{Creation}\\
&~\Big\{a_{k,l}\colon k\in \{1,2,\dots,n\}, l\in \{1,2,\dots,n-1\}\Big\}\quad \text{Annihilation},
\end{align}
where the first label in the subscript is the usual index of the site occupied by the boson.
The second index refers to the internal degrees of freedom of the boson.
Each boson can have at most $n-1$ possible internal states to ensure that basis states can be constructed for arbitrary irreps.
In photonic experiments, this internal degree of freedom could correspond to the polarization, frequency, orbital angular momementum or the time of arrival of photons.

The $\mathfrak{su}(n)$ operators are mapped to number-preserving bilinear products of boson creation and annihilation operators. 
Specifically, raising and lowering operators $C_{i,j}$ of $\mathfrak{su}(n)$ map to bosonic operators $c_{i,j}$ according to
\begin{equation}
C_{i,j} \mapsto c_{i,j} \defeq \sum_{k=1}^{n-1} a^\dagger_{i,k} a_{j,k}.
\end{equation} 
Operators $\{c_{i,j}\}$ make bosons hop from site $j$ to site $i$.
The operators $h_i$ are the image of the Cartan operators $H_i$:
\begin{equation}
H_{i} \mapsto h_{i} \defeq a^\dagger_i a_i - a^\dagger_{i+1} a_{i+1}.
\end{equation}
Operators $\{h_i\}$ count the difference in the total number of bosons at two sites and commute among themselves.
As usual, we used the upper-case symbols to denote the $\mathfrak{su}(n)$ elements and the corresponding lower-case symbols for the respective boson operators.

The boson realizations of the basis states of SU$(n)$ are obtained by the action of polynomials in creation operators $\{a^\dagger_{i,j}\colon i\in \{1,2,\dots,n\},j\in \{1,2,\dots,n-1\}\}$ on the $n$-site vacuum state $\ket{0}$. 
Each term in the polynomial is a product of 
\begin{equation}
N_K= \kappa_1 + 2\kappa_2 + \dots + (n-1)\kappa_{n-1}
\label{Eq:N}
\end{equation}
boson creation operators for basis states in irreps $K = (\kappa_1,\kappa_2,\dots,\kappa_{n-1})$.
Therefore, an SU$(n)$ basis state is specified by the coefficient of a polynomial consisting of terms that are products of $N_K$ creation operators.

The hws of a given SU$(n)$ irrep can be explicitly constructed in the boson realization (as polynomials in creation and annihilation operators) according to the following lemma.
\ignorespaces
\begin{lem}[Boson realization of hws~\cite{Moshinsky1962a,Moshinsky1962}]
\label{Lemma:hws}
The bosonic state 
\begin{equation}
\ket{\psi^K_{\rm hws}}
= 
\Det \begin{pmatrix}a_{1,1}^\dagger&\dots&a_{1,n-1}^\dagger\\
\vdots&\ddots&\vdots\\
a_{n-1,1}^\dagger&\dots &a_{n-1,n-1}^\dagger\end{pmatrix}^{\kappa_{n-1}}\cdots 
\Det \begin{pmatrix}a_{1,1}^\dagger&a_{1,2}^\dagger\\a_{2,1}^\dagger&a_{2,2}^\dagger\end{pmatrix}^{\kappa_2}
\Det \begin{pmatrix}{a_{1,1}^\dagger}\end{pmatrix}^{\kappa_1}\ket{0}
\label{Eq:Hws}
\end{equation}
is a hws for a given SU$(n)$ irrep $K = (\kappa_1,\kappa_2,\dots,\kappa_{n-1})$.
\end{lem}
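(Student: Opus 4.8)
The plan is to verify the two defining properties of a highest-weight vector for the irrep $K$: that $\ket{\psi^K_{\rm hws}}$ carries the weight $\Lambda=K$ under the Cartan operators, and that it is annihilated by every raising operator $c_{i,j}$ with $i<j$. Writing $A_m$ for the $m\times m$ matrix with entries $a^\dagger_{r,s}$, $r,s\in\{1,\dots,m\}$, the single structural fact I will lean on throughout is that $\Det(A_m)=\sum_{\sigma\in S_m}\mathrm{sgn}(\sigma)\,a^\dagger_{1,\sigma(1)}\cdots a^\dagger_{m,\sigma(m)}$, so that every monomial of $\Det(A_m)$ deposits exactly one boson on each of the sites $1,\dots,m$.

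First I would establish the weight. Reading off the product in Eq.~(\ref{Eq:Hws}), the factor $\Det(A_m)^{\kappa_m}$ deposits $\kappa_m$ bosons on each site $1,\dots,m$, so the total occupation of site $i$ is $n_i=\sum_{m\ge i}\kappa_m$, independent of which monomials are selected. Since $h_i=n_i-n_{i+1}$, the state is a simultaneous eigenstate of the $h_i$ with eigenvalues $n_i-n_{i+1}=\kappa_i$; its weight is therefore $(\kappa_1,\dots,\kappa_{n-1})=K$. As noted earlier, the weight of a hws labels its irrep, so this pins down the candidate to irrep $K$.

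The heart of the argument is annihilation by the raising operators, and here I would prove the stronger statement that each $c_{i,j}=\sum_k a^\dagger_{i,k}a_{j,k}$ with $i<j$ commutes with every determinant factor $\Det(A_m)$. Because the creation piece $a^\dagger_{i,k}$ commutes with all entries of $A_m$, one has $[c_{i,j},\Det(A_m)]=\sum_k a^\dagger_{i,k}\,[a_{j,k},\Det(A_m)]$. For $m<j$ the site-$j$ operators are absent and this vanishes at once. For $m\ge j$, the commutator $[a_{j,k},\Det(A_m)]$ is the signed minor deleting row $j$ and column $k$, so $\sum_k a^\dagger_{i,k}\,[a_{j,k},\Det(A_m)]$ is precisely the cofactor expansion along row $j$ of the matrix obtained from $A_m$ by overwriting row $j$ with $(a^\dagger_{i,1},\dots,a^\dagger_{i,m})$. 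Since $i<j\le m$, that matrix already contains row $i$ with these same entries, so it has two identical rows; as all creation operators commute, the scalar determinant identity applies verbatim and forces $[c_{i,j},\Det(A_m)]=0$ for every $m$. Consequently $c_{i,j}$ commutes with each power $\Det(A_m)^{\kappa_m}$ and hence with the entire product, giving $c_{i,j}\ket{\psi^K_{\rm hws}}=(\text{product})\,c_{i,j}\ket{0}=0$ for all $i<j$.

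Combined with the weight computation, this identifies $\ket{\psi^K_{\rm hws}}$ as a highest-weight vector of weight $K$, and by the uniqueness of the hws recorded in the excerpt it is the hws of irrep $K$. The step demanding most care—and the main obstacle—is the cofactor/repeated-row bookkeeping: one must check that the overwritten entries genuinely reproduce an \emph{existing} row (which hinges on the inequality $i<j\le m$) and that the commutativity of the $a^\dagger_{i,k}$ legitimizes manipulating the operator-valued determinant exactly as a scalar one. I would also note in passing that $\ket{\psi^K_{\rm hws}}$ is nonzero, since the diagonal monomial $\prod_m (a^\dagger_{1,1}\cdots a^\dagger_{m,m})^{\kappa_m}$ of the expanded product survives, so the vector is a genuine state rather than $0$.
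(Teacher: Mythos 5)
Your proposal is correct, and it supplies in full the verification that the paper itself leaves implicit: the paper merely cites Moshinsky and remarks that ``one can verify'' the annihilation $c_{j,k}\ket{\psi^K_{\rm hws}}=0$ for $j<k$, which is exactly what your commutator--cofactor argument establishes (the key identity $[c_{i,j},\Det(A_m)]=0$ for $i<j$, via the repeated-row determinant over the commutative ring of creation operators, is the standard route in the cited literature). Your weight computation giving eigenvalue $\kappa_i$ for $h_i$ and the nonvanishing check via the surviving diagonal monomial are likewise sound and round out the claim.
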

\noindent \ignorespaces
One can verify that the state $\ket{\psi^K_\text{hws}}$~(\ref{Eq:Hws}) is annihilated
\begin{equation}
c_{j,k} \ket{\psi^K_\text{hws}} = 0 \quad \forall j<k
\end{equation}
by the action of any of the raising operators.

Thus, the hws of any irrep can be constructed analytically using Lemma~\ref{Lemma:hws}.
In the following section, we provide an algorithm to construct each of the basis states of arbitrary SU$(n)$ irreps.
Furthermore, we present an algorithm to compute expressions for SU$(n)$ $\D$-functions in terms of the entries of the fundamental representation.

\section{Results: Algorithms for boson realizations of SU$(n)$ states and for SU$(n)$ $\D$-functions}
\label{Section:Algorithms}

In this section, we present three algorithms%
\footnote{
In Algorithms~\ref{Algorithm:BasisSet}-\ref{Algorithm:D}, we denote operations in capital case and \textsc{SmallCaps} font. 
Variables are denoted by roman font and are in lower case.
}.
Algorithm~\ref{Algorithm:BasisSet} (boson-set algorithm) constructs basis sets for each weight of a given $\mathfrak{su}(n)$ irrep.
Algorithm~\ref{Algorithm:Main} (canonical-basis-states algorithm) employs the boson-set algorithm to compute expressions for the canonical basis states of a given SU$(n)$ irrep $K$.
The states thus constructed are used by Algorithm~\ref{Algorithm:D} to calculate the $\D$-functions of a given SU$(n)$ transformation.

\begin{figure}
\centering
\includegraphics[width=3in]{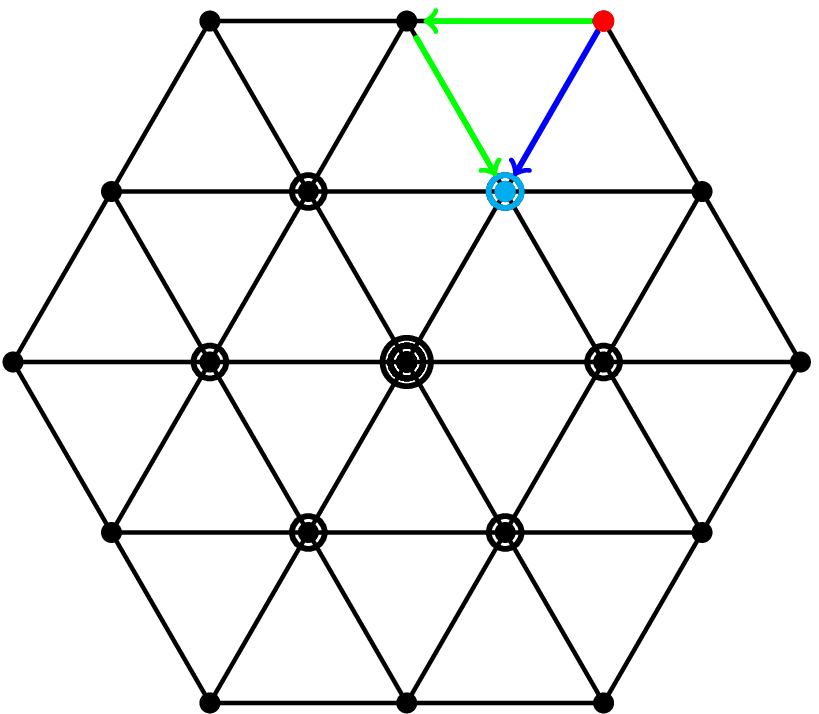}\\
\caption{
The first step of the basis-set computation algorithm~(Algorithm~\ref{Algorithm:BasisSet}) illustrated for the $(2,2)$ irrep of SU$(3)$, where the dimension of the space of states at a given vertex is the sum of the number of dots and the number of circles at the vertex.
The algorithm constructs the hws (occupying the red vertex) using Lemma~\ref{Lemma:hws}.
The lowering operators can transform states at one vertex to states at another vertex along different paths connecting the starting and the target vertex, for instance the two paths coloured green and blue.
Lowering along the different paths to reach a target vertex will generate the same number of LI as the weight multiplicity.
In our illustration, we obtain a basis set that contains two independent states at the target vertex.
The algorithm traverses the irrep graph systematically until all basis sets are calculated.}%
\label{Figure:Algorithm}%
\end{figure}

Algorithms~\ref{Algorithm:BasisSet} and~2 rely on mapping the SU$(n)$ irrep to a graph and systematically traversing the graph to obtain basis states.
The vertices of the irrep graph are identified with the weights of the given irrep of SU$(n)$ and the edges with the action of the elements of the Lie algebra $\mathfrak{su}(n)$ on the states.
Specifically, the irrep graph $G = (\mathcal{V},\mathcal{E})$ of an SU$(n)$ irrep is defined as follows. 

\begin{defn}[Irrep graph]
The bijection 
\begin{equation}
v\colon \{\Lambda_1,\Lambda_2,\dots,\Lambda_{d} \} \to \mathcal{V}
\end{equation}
maps the set $\{\Lambda_1,\Lambda_2,\dots,\Lambda_d\}$ of the $d$ weights in the given irrep to the vertices 
\begin{equation}
\mathcal{V} = \{v(\Lambda_1),v(\Lambda_2),\dots,v(\Lambda_d)\}
\end{equation}
of its irrep graph.
Vertices $v(\Lambda_{k})$ and $v(\Lambda_{\ell})$ are connected by an edge $e_j = (v(\Lambda_{k}),v(\Lambda_{\ell})) \in \mathcal{E}$ iff $\exists~c_{i,j}, \Lambda_{k}, \Lambda_{\ell}$ such that 
\begin{equation}
c_{i,j\ne i}\ket{\psi_{\Lambda_{k}}} = \ket{\psi_{\Lambda_{\ell}}},
\end{equation}
where $\ket{\psi_{\Lambda_{k}}}$ and $\ket{\psi_{\Lambda_{\ell}}}$ are SU($n$) states that have weights ${\Lambda_{k}}$ and ${\Lambda_{\ell}}$ respectively. 
In general, states $\ket{\psi_{\Lambda_{k}}}$ and $\ket{\psi_{\Lambda_{\ell}}}$ are linear combinations of canonical basis states.
Edges~$\mathcal{E}$ together with the vertices $\mathcal{V}$ define the irrep graph $G = (\mathcal{V},\mathcal{E})$. 
\end{defn}

More than one basis state can have the same weight.
The number of basis states sharing a weight $\Lambda_i$ is defined as the multiplicity $M(\Lambda_i)$ of the weight.
In other words, each vertex $v(\Lambda_i)$ is identified with an $M(\Lambda_i)$-dimensional space spanned by those canonical basis states that have weight $\Lambda_i$.
The vertex space and vertex basis sets are defined as follows.
\begin{defn}[Vertex spaces]
The vertex space of $v(\Lambda_i)$ is the span
\begin{equation}
\Psi(\Lambda_i) = \spn{\left(\ket{{\psi^{1}_{\Lambda_i}}},\ket{\psi^{2}_{\Lambda_i}},\dots,\ket{\psi^{M(\Lambda_i)}_{\Lambda_i}}\right)}
\label{Eq:Space}
\end{equation}
of the canonical basis states~(Definition~\ref{Definition:CanonicalBasisStates}) that have the weight $\Lambda_i$.
\end{defn}
\noindent 
The set $\left\{\ket{\psi^{(1)}_{\Lambda_i}},\ket{\psi^{(2)}_{\Lambda_i}},\dots, \ket{\psi^{(M(\Lambda_i))}_{\Lambda_i}}\right\}$ of canonical basis states is not the only set that spans the vertex space $\Psi(\Lambda_i)$ of $v(\Lambda_i)$.
In general, basis sets of $\Psi(\Lambda_i)$ can be defined as follows.
\begin{defn}[Vertex basis sets]
The set 
\begin{equation}
\left\{\ket{\phi^{(1)}_{\Lambda_i}},\ket{\phi^{(2)}_{\Lambda_i}},\dots, \ket{\phi^{(M(\Lambda_i))}_{\Lambda_i}}\right\}
\end{equation} is called the basis set of a vertex $v(\Lambda_i)$ if it spans the vertex space $\Psi(\Lambda_i)$~(\ref{Eq:Space}) of $v(\Lambda_i)$, i.e., 
\begin{equation}
\spn{\left(\ket{\phi^{1}_{\Lambda_i}},\ket{\phi^{2}_{\Lambda_i}},\dots,\ket{\phi^{M(\Lambda_i)}_{\Lambda_i}}\right)}
= \Psi(\Lambda_i).
\label{Eq:Space2}
\end{equation}
\end{defn}
\noindent 
The states $\left\{\ket{\phi^{(1)}_{\Lambda_i}},\ket{\phi^{(2)}_{\Lambda_i}},\dots, \ket{\phi^{(M(\Lambda_i))}_{\Lambda_i}}\right\}$ are linear combinations of the canonical basis states $\left\{\ket{\psi^{(1)}_{\Lambda_i}},\ket{\psi^{(2)}_{\Lambda_i}},\dots, \ket{\psi^{(M(\Lambda_i))}_{\Lambda_i}}\right\}$.
Algorithm~\ref{Algorithm:BasisSet} computes basis sets of the spaces $\Psi(\Lambda_i)$ for each of the $d$ weights $\Lambda_i$ that occurs in a given irrep. 

\begin{figure}
\linespread{1}
\begin{algorithm}[H]
\caption{Basis-Set Algorithm}
\label{Algorithm:BasisSet}
	\begin{algorithmic}[1]
	\Require{
	\Statex
	\begin{itemize}  	
		\item
  	hws $\ket{\psi^{K}_\text{hws}}$ \Comment{Degree $N_K$~(\ref{Eq:N}) polynomial in bosonic creation operators.}
  	\item
  	$m \in \mathds{Z}^+$ \Comment{$\ket{\psi^{K}_\text{hws}}$ is a hws of SU$(m)$ irrep $K$. }
	\end{itemize} 
	}
	\Ensure{
	\Statex
	\begin{itemize}   	
		\item
  	$\{\Lambda_1,\Lambda_2,\dots, \Lambda_d \colon \Lambda_i \in (\mathds{Z}^+\cup 0)^{m-1}\}$ \Comment {List of weights in the irrep graph of $K$.}
		\item $d,$ Basis sets~(\ref{Eq:BasisSets})
\begin{align}
	&\left\{\ket{\phi^{(1)}_{\Lambda_1}},\ket{\phi^{(2)}_{\Lambda_1}},\dots, \ket{\phi^{(M(\Lambda_1)}_{\Lambda_1}}\right\},
	\left\{\ket{\phi^{(1)}_{\Lambda_2}},\ket{\phi^{(2)}_{\Lambda_2}},\dots, \ket{\phi^{(M(\Lambda_2))}_{\Lambda_2}}\right\},
	\dots,\nonumber \\
	&\left\{\ket{\phi^{(1)}_{\Lambda_i}},\ket{\phi^{(2)}_{\Lambda_i}},\dots, \ket{\phi^{(M(\Lambda_i))}_{\Lambda_i}}\right\},
	\dots,
	\left\{\ket{\phi^{(1)}_{\Lambda_d}},\ket{\phi^{(2)}_{\Lambda_d}},\dots, \ket{\phi^{(M(\Lambda_d))}_{\Lambda_d}}\right\}\nonumber.
\end{align}
	\end{itemize} 
	}

	\Procedure{BasisSet}{$m$, $\ket{\psi^{K}_\text{hws}}$}
	\State 
    Initialize empty statesList, empty weightList and currentStateQueue $\gets \ket{\psi^{K}_\text{hws}}$ \;
	\While{currentStateQueue is not empty}
		\State currentState $\gets$ \textsc{Dequeue}(currentStateQueue)
		\For{\textsc{CurrentOperator} $\in$ set of $\mathfrak{su}(m)$ lowering operations}
			\State newState $\gets$ \textsc{CurrentOperator}(currentState) \label{AlgLin:Lowering}
			\If{newState $\ne$ 0}
				\If {weight of currentState is already in stateList}
					\If{currentState is LI of stateList states with same weight}\label{AlgLin:LI}
						\State independentState $\leftarrow$ \textsc{Normalize}(newState)
						\State {Enqueue} independentState in currentStateQueue
						\State Add independentState to stateList
						\State Add weight of independentState to weightList
					\EndIf\Comment{Else, do nothing.}
				\Else
					\State Enqueue newState in currentStateQueue \label{AlgLine:Enqueue}
					\State Add \{weight(newState),newState\} to stateList
				\EndIf
			\EndIf
		\EndFor
	\EndWhile
	\State Return stateList
	\EndProcedure
	\end{algorithmic}
\end{algorithm}  
\end{figure}

\subsection{Basis-set algorithm (Algorithm~\ref{Algorithm:BasisSet})}
\label{Subsec:BasisSet}
The basis-set algorithm, which finds the basis sets for a given SU$(m)$ irrep, is the key subroutine of our canonical-basis-state algorithm.
Algorithm~\ref{Algorithm:BasisSet} requires inputs $\ket{\psi^{K}_\text{hws}}$ and $m$, where $\ket{\psi^{K}_\text{hws}}$ is a hws of the irrep $K$ of $\mathfrak{su}(m)$ algebra.
The state $\ket{\psi^{K}_\text{hws}}$ is a bosonic state, which is expressed as a summation over products of $N_K$~(\ref{Eq:N}) creation operators $\{a^\dagger_{i,j}\colon i\in \{1,2,\dots,m\}, j\in \{1,2,\dots,m-1\}\}$.
This summation acts on the $m$-site bosonic vacuum state to give an $N_K$-boson state. 
The algorithm returns multiple sets
\begin{align}
	&\left\{\ket{\phi^{(1)}_{\Lambda_1}},\ket{\phi^{(2)}_{\Lambda_1}},\dots, \ket{\phi^{(M(\Lambda_1)}_{\Lambda_1}}\right\},
	\left\{\ket{\phi^{(1)}_{\Lambda_2}},\ket{\phi^{(2)}_{\Lambda_2}},\dots, \ket{\phi^{(M(\Lambda_2))}_{\Lambda_2}}\right\},
	\dots,\nonumber \\
	&\left\{\ket{\phi^{(1)}_{\Lambda_i}},\ket{\phi^{(2)}_{\Lambda_i}},\dots, \ket{\phi^{(M(\Lambda_i))}_{\Lambda_i}}\right\},
	\dots,
	\left\{\ket{\phi^{(1)}_{\Lambda_d}},\ket{\phi^{(2)}_{\Lambda_d}},\dots, \ket{\phi^{(M(\Lambda_d))}_{\Lambda_d}}\right\}
\label{Eq:BasisSets}
\end{align}
of $\mathfrak{su}(m)$ states, with each set spanning the space $\Psi(\Lambda_i)$~(\ref{Eq:Space}) at a different vertex $v(\Lambda_i)$ in the SU$(m)$ irrep $K$.
The states in the output basis sets are represented as polynomials in lowering operators acting on the hws, or equivalently as polynomials in creation and annihilation operators acting on the $n$-site vacuum state. 
Figure~\ref{Figure:Algorithm} is an illustrative example of the algorithm.

A modified breadth-first search (BFS) graph algorithm~\cite{Moore1961,Lee1961,Knuth1998} is used to traverse the irrep graph for states. 
As in usual BFS, we maintain a queue%
\footnote{A queue~\cite{Knuth1998} is a first-in-first-out data structure whose entries are maintained in order.
The two operations allowed on a queue are enqueue, i.e., the addition of entries to the rear and dequeue, which is the removal of entries from the front of the queue.
Both the enqueue and dequeue operations require constant, i.e., O(1) time.}, 
called currentQueue, of the states that have been constructed but whose neighbourhood is yet to be explored.
The algorithm starts with the given hws in currentQueue and iteratively dequeues a state from the front of the queue.
States neighbouring the dequeued state are obtained by enacting one-by-one each of the lowering operators of the algebra.
The newly found states are enqueued into the rear of currentQueue, and the current state and its weight are stored.

We modify BFS to handle vertices with weight multiplicity greater than unity as follows.
While traversing the irrep graph, the algorithm directly enqueues the first state that is found at each vertex.
When the same vertex is explored along a different edge, i.e., by enacting different lowering operators, a different state is found in general.
If the newly constructed state is LI of the states already constructed at the vertex, then the new state is enqueued into currentQueue.%

The algorithm truncates when a state in currentQueue is annihilated by all of the lowering operators and there is no other state in the queue.
This final state must exist because the number of LI states in a given SU$(n)$ irrep is finite according to the following standard result in representation theory.
\begin{lem}[Dimension of an SU$(n)$ irrep~\cite{Slansky1981}]
\label{Lemma:Dimension}
The dimension $\Delta_K$ of the carrier space of an SU$(n)$ irrep $K$ is
\begin{align}
\Delta_{K} \defeq & M(\Lambda_1) + M(\Lambda_2) + \dots + M(\Lambda_d)\nonumber \\
=& \left(1+\kappa_1\right) \left(1+\kappa_2\right)\cdots \left(1+\kappa_{n-1}\right) \left(1+\frac{\kappa_1+\kappa_2}{2}\right)\left(1+\frac{\kappa_2+\kappa_3}{2}\right)\nonumber\\
&\cdots\left(1+\frac{\kappa_{n-2}+\kappa_{n-1}}{2}\right)\left(1+\frac{\kappa_1+\kappa_2+\kappa_3}{3}\right)\left(1+\frac{\kappa_2+\kappa_3+\kappa_4}{3}\right)\nonumber\\
&\cdots\left(1+\frac{\kappa_{n-3}+\kappa_{n-2}+\kappa_{n-1}}{3}\right)\cdots\left(1+\frac{\kappa_1+\kappa_2+\dots+\kappa_{n-1}}{n-1}\right).
\label{Eq:IrrepDimension}
\end{align}
\end{lem}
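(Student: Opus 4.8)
The plan is to read the two equalities in Lemma~\ref{Lemma:Dimension} as having very different content. The first equality $\Delta_K = M(\Lambda_1)+\dots+M(\Lambda_d)$ is immediate: the carrier space of $K$ decomposes as a direct sum of weight spaces indexed by the $d$ distinct weights $\Lambda_i$ occurring in the irrep, and the dimension of the weight space at $\Lambda_i$ is by definition the multiplicity $M(\Lambda_i)$, so summing over distinct weights recovers the total dimension. All the work therefore goes into the closed-form product, which I would obtain by specializing the Weyl dimension formula to the root system $A_{n-1}$ of $\mathfrak{su}(n)$.

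First I would fix the root data. The positive roots of $A_{n-1}$ are indexed by pairs $(i,j)$ with $1\le i<j\le n$; writing $\alpha_{ij}$ for the corresponding root, there are $\binom{n}{2}$ of them, and each decomposes into consecutive simple roots as $\alpha_{ij}=\alpha_i+\alpha_{i+1}+\dots+\alpha_{j-1}$. The labels $\kappa_i=\langle\lambda,H_i\rangle$ are exactly the Dynkin labels of the highest weight $\lambda$ of $K$ relative to the simple coroots attached to the Cartan operators $H_i$ of Definition~\ref{Defn:Algebra}. Because all roots of $A_{n-1}$ have equal length, I can work with a single invariant inner product $\langle\cdot,\cdot\rangle$ normalized so that $\langle\alpha_k,\alpha_k\rangle=2$, under which roots and coroots are identified and $\langle\lambda,\alpha_k\rangle=\kappa_k$.

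Next I would invoke the Weyl dimension formula
\begin{equation}
\Delta_K=\prod_{1\le i<j\le n}\frac{\langle\lambda+\rho,\alpha_{ij}\rangle}{\langle\rho,\alpha_{ij}\rangle},
\end{equation}
where $\rho$ is the Weyl vector, equivalently the weight all of whose Dynkin labels equal unity, so that $\langle\rho,\alpha_k\rangle=1$ for every simple root. Using the decomposition of $\alpha_{ij}$ into $j-i$ simple roots gives $\langle\rho,\alpha_{ij}\rangle=j-i$ and $\langle\lambda,\alpha_{ij}\rangle=\sum_{k=i}^{j-1}\kappa_k$, so each factor evaluates to
\begin{equation}
\frac{\langle\lambda+\rho,\alpha_{ij}\rangle}{\langle\rho,\alpha_{ij}\rangle}=\frac{(j-i)+\sum_{k=i}^{j-1}\kappa_k}{j-i}=1+\frac{\kappa_i+\kappa_{i+1}+\dots+\kappa_{j-1}}{j-i}.
\end{equation}

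Finally I would organize the product by the root ``length'' $p=j-i$, which runs over $1,\dots,n-1$: for fixed $p$ the roots $\alpha_{i,i+p}$ with $i=1,\dots,n-p$ contribute the factors $1+(\kappa_i+\dots+\kappa_{i+p-1})/p$. The block $p=1$ reproduces $(1+\kappa_1)\cdots(1+\kappa_{n-1})$, the block $p=2$ reproduces the factors with denominator $2$, and so on up to the single $p=n-1$ factor $1+(\kappa_1+\dots+\kappa_{n-1})/(n-1)$, matching~(\ref{Eq:IrrepDimension}) term by term. The only step demanding care is the bookkeeping that the assignment of positive roots to factors is a bijection onto the listed factors, with no omission or repetition; since positive roots are labelled uniquely by $(i,j)$, hence by $(i,p)$, this is automatic, so I expect no genuine obstacle. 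An alternative closer to the paper's emphasis on the canonical chain would be induction on $n$ through the SU$(n)\!\downarrow\!$SU$(n-1)$ branching rule, but that route replaces the transparent product structure with a summation identity and is appreciably more laborious, so I would prefer the Weyl-formula derivation.
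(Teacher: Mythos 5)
Your proposal is correct, but note that the paper does not actually prove Lemma~\ref{Lemma:Dimension} at all: it is invoked as a standard result in representation theory with a citation to Slansky's review, so there is no internal argument to compare against — you have supplied the proof the paper delegates to the literature. Your two-part structure is sound: the first equality is indeed immediate from the weight-space decomposition of the carrier space, with $M(\Lambda_i)$ the dimension of the weight space at $\Lambda_i$, and the closed-form product is exactly the Weyl dimension formula specialized to $A_{n-1}$. The details check out: with the simply-laced normalization $\langle\alpha_k,\alpha_k\rangle=2$ the identification of roots with coroots is legitimate, so $\langle\lambda,\alpha_{ij}\rangle=\sum_{k=i}^{j-1}\kappa_k$ follows from the Dynkin labels $\kappa_k=\langle\lambda,\alpha_k\rangle$ attached to the Cartan operators $H_k$ of Definition~\ref{Defn:Algebra}; $\langle\rho,\alpha_{ij}\rangle=j-i$ follows from $\rho$ having unit Dynkin labels; and grouping the $\binom{n}{2}$ positive roots by height $p=j-i$ gives $\sum_{p=1}^{n-1}(n-p)=n(n-1)/2$ factors, matching~(\ref{Eq:IrrepDimension}) block by block with the bijection you note being automatic from the unique labelling of positive roots by $(i,p)$. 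Your closing remark is also apt: an induction through the canonical SU$(n)\supset$SU$(n-1)$ chain via the branching rule would be closer in spirit to the paper's subalgebra machinery (and to how the paper uses this dimension count in Theorems~\ref{Theorem:1Terminats} and~\ref{Theorem:2Terminates}), but it replaces the transparent multiplicative structure with a summation identity, so the Weyl-formula route you chose is the cleaner derivation of the stated product.
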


Now we prove that the basis-set algorithm terminates. 
The proof relies on the fact that the carrier space of SU$(m)$ irrep is finite-dimensional (Lemma~\ref{Lemma:Dimension}).
The algorithm's computational cost is quantified by the number of times the lowering operators are applied on the hws or on states reached by lowering from the hws.
We show that the computational cost of Algorithm~\ref{Algorithm:BasisSet} is linear in the dimension $\Delta_{K}$ of the irrep whose hws is given as input and polynomial in $n$. 
\begin{thm}[Algorithm~\ref{Algorithm:BasisSet} terminates] 
\label{Theorem:1Terminats}
Suppose Algorithm~\ref{Algorithm:BasisSet} receives as input an hws $\ket{\psi^{K}_\text{hws}}$ of an SU$(m)$ irrep $K$. 
Then the algorithm terminates after no more than $\Delta_Km(m-1)/2$ applications of lowering operators.
\end{thm}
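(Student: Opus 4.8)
The plan is to treat Algorithm~\ref{Algorithm:BasisSet} as a breadth-first search and to bound its cost as a product of two factors: the number of states that are ever dequeued, and the number of lowering operators applied to each dequeued state. The second factor is immediate: the inner \textbf{for} loop ranges over the full set of $\mathfrak{su}(m)$ lowering operators $\{c_{j,k}\colon j>k,\ j,k\in\{1,\dots,m\}\}$, and the number of such operators is the number of ordered pairs $(j,k)$ with $j>k$, namely $\binom{m}{2}=m(m-1)/2$. Thus each dequeue contributes exactly $m(m-1)/2$ applications of lowering operators, and it remains to show that at most $\Delta_K$ states are ever dequeued.

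First I would establish the queue bookkeeping: a state is enqueued only when it is first constructed (either at a previously unvisited weight, line~\ref{AlgLine:Enqueue}, or when it passes the linear-independence test at an already-visited weight, line~\ref{AlgLin:LI}), and each enqueued state is dequeued exactly once by the first-in-first-out discipline. Hence the number of dequeues equals the number of states ever enqueued (including the seed hws), which I denote $|S|$. The crux is to show $|S|\le\Delta_K$. I would argue that $S$ is a linearly independent subset of the carrier space: states in $S$ carrying distinct weights are automatically independent, being eigenvectors of the Cartan operators with distinct eigenvalues, while states in $S$ sharing a common weight $\Lambda_i$ are pairwise independent because a state is admitted at an already-visited weight only after the explicit check on line~\ref{AlgLin:LI} that it is linearly independent of the states already stored there. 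Consequently at most $M(\Lambda_i)$ states are enqueued at each vertex $v(\Lambda_i)$, and summing gives $|S|\le\sum_i M(\Lambda_i)=\Delta_K$ by Lemma~\ref{Lemma:Dimension}.

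Combining the two factors yields the bound of $\Delta_K\,m(m-1)/2$ lowering-operator applications, and termination follows because the queue can only be fed finitely many states: once $|S|$ reaches its ceiling no further states are admitted, the queue drains, and the \textbf{while} loop exits. The main obstacle is the linear-independence argument of the second paragraph, specifically verifying that the per-vertex test on line~\ref{AlgLin:LI} is strong enough to guarantee that the globally accumulated set $S$ never exceeds the irrep dimension; this is where the finite-dimensionality supplied by Lemma~\ref{Lemma:Dimension} is essential, and care is needed to confirm that a state already lying in the span at its vertex is correctly rejected so that no vertex contributes more than its multiplicity $M(\Lambda_i)$.
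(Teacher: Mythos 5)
Your proof is correct and takes essentially the same route as the paper's: bound the number of states that ever enter the queue by $\Delta_K$ via linear independence, then multiply by the $m(m-1)/2$ lowering operators applied to each dequeued state. Your explicit Cartan-eigenvalue argument for independence of states at distinct weights (which justifies why the per-weight test on line~\ref{AlgLin:LI} suffices globally) is a detail the paper leaves implicit, but it does not alter the approach.
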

\begin{proof}
The proof is in two parts.
Firstly, the number of states that enters currentStateQueue is bounded above by the dimension $\Delta_K$~(\ref{Eq:IrrepDimension}) of the irrep space.
Secondly, as each state that enters currentStateQueue is acted upon by no more than $n(n-1)/2$ lowering operators, the number of lowering operations performed is less than or equal to $\Delta_K n(n-1)/2$.

We show that the number of states that enter currentStateQueue is no more than~$\Delta_K$ as follows.
As each currentState is a linear combination of states obtained by acting lowering operators (Line~\ref{AlgLin:Lowering}) on the given hws, each state that enters currentStateQueue is in the irrep labelled by the hws.
Moreover, each state entering the queue is tested for linear independence (Line~\ref{AlgLin:LI}) with respect to the states already obtained.
Any state that is not LI is discarded.
Therefore, each enqueued state (Algorithm~\ref{Algorithm:BasisSet}, Line~\ref{AlgLine:Enqueue}) is in the correct irrep $K$ and is LI of each other enqueued state.
Thus, the number of states that ever enter currentStateQueue is no more than the number~$\Delta_K$ of LI states in irrep $K$.

In each iteration of the algorithm, we act all the lowering operators on the states in currentStateQueue.
The number of lowering operations is thus bounded above by the product $\Delta_K n(n-1)/2$ of the number of states that enter currentStateQueue and of the number of lowering operators in the $\mathfrak{su}(n)$ algebra.
The algorithm thus terminates after no more than $\Delta_Kn(n-1)/2$ applications of lowering operators.
\end{proof}

We now prove that the algorithm returns the correct output on termination.
The proof requires the following lemma stating that each canonical basis state can be obtained by enacting only with the lowering operators on the hws.
\begin{lem}[Every basis-state can be reached by lowering from the hws~\cite{Humphreys1972}]
\label{Lemma:Cyclic}
No canonical basis-state of a given SU$(n)$ irrep $K$ is LI of the states obtained by lowering from the hws by the action
\begin{equation}
c_{i_{k},j_{k}}\cdots c_{i_{2},j_{2}} c_{i_{1},j_{1}}\ket{\psi_\mathrm{hws}}\quad i_\ell \le j_\ell\, \forall 1\le \ell \le k
\label{Eq:Construction}
\end{equation}
of $k\le \sum_i{\kappa_i}$ number of $\mathfrak{su}(n)$ lowering operators on the hws of the irrep.
\end{lem}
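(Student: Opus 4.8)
The plan is to recognize the statement as the assertion that the highest-weight vector $\ket{\psi_\mathrm{hws}}$ is \emph{cyclic} under the lowering operators alone: the span of all states of the form in Eq.~(\ref{Eq:Construction}) is the entire carrier space of $K$. I would establish this by combining the triangular (root-space) decomposition of the algebra with the Poincar\'e--Birkhoff--Witt (PBW) theorem, and only then bound the number $k$ of lowering operators by a weight-counting argument. To begin, I would pass to the complex extension $\mathfrak{g} = \mathfrak{su}(n)_{\mathds{C}}$ and fix its triangular decomposition $\mathfrak{g} = \mathfrak{n}_- \oplus \mathfrak{h} \oplus \mathfrak{n}_+$, where $\mathfrak{n}_+ = \spn\{C_{i,j}\colon i<j\}$ collects the raising operators, $\mathfrak{h} = \spn\{H_1,\dots,H_{n-1}\}$ is the Cartan subalgebra, and $\mathfrak{n}_- = \spn\{C_{i,j}\colon i>j\}$ collects the lowering operators. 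By the definition of the hws, $\ket{\psi_\mathrm{hws}}$ is annihilated by every element of $\mathfrak{n}_+$ and is a simultaneous eigenvector of $\mathfrak{h}$.

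The key step is to invoke irreducibility. Since the carrier space of $K$ is an irreducible $\mathfrak{g}$-module and is therefore generated by any nonzero vector, it equals $U(\mathfrak{g})\ket{\psi_\mathrm{hws}}$, the orbit of the hws under the universal enveloping algebra. The PBW theorem supplies a basis of $U(\mathfrak{g})$ consisting of ordered monomials factored as $U(\mathfrak{n}_-)\,U(\mathfrak{h})\,U(\mathfrak{n}_+)$. Applying such a monomial to $\ket{\psi_\mathrm{hws}}$, the rightmost $U(\mathfrak{n}_+)$ factor collapses to a scalar because each raising operator annihilates the hws, and the middle $U(\mathfrak{h})$ factor collapses to a scalar because the hws is a Cartan eigenvector; only the $U(\mathfrak{n}_-)$ factor acts nontrivially. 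Hence $U(\mathfrak{g})\ket{\psi_\mathrm{hws}} = U(\mathfrak{n}_-)\ket{\psi_\mathrm{hws}}$, whose right-hand side is precisely the span of the lowering-operator monomials of Eq.~(\ref{Eq:Construction}). This shows that every canonical basis state lies in that span, i.e.\ none is linearly independent of it.

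Finally I would bound $k$. Each lowering operator $C_{i,j}$ with $i>j$ lowers the weight by a positive root of height $i-j$, so a product of $k$ lowering operators sends the hws to a weight equal to $K$ minus a sum of $k$ positive roots, and finiteness of the module (Lemma~\ref{Lemma:Dimension}) ensures only finitely many weights arise. The bound $k\le\sum_i\kappa_i$ then reduces to showing that every weight of the irrep is $K$ minus a sum of at most $\sum_i\kappa_i$ positive roots, exploiting the non-simple roots to absorb several simple-root steps at once. I expect this combinatorial estimate, rather than the cyclicity, to be the main obstacle: the cyclicity is a clean consequence of PBW and irreducibility, whereas the constant $\sum_i\kappa_i$ demands a careful covering argument over the positive roots, together with the verification that the truncated family of monomials of length at most $\sum_i\kappa_i$ still spans each (possibly degenerate) weight space rather than annihilating the intermediate states before the target weight is reached.
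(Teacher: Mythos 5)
Your PBW argument for the cyclicity half is correct, and it is essentially all the paper itself supplies: the paper gives no proof of this lemma, only the citation to Humphreys, whose standard argument is exactly your computation $U(\mathfrak{g})\ket{\psi_\mathrm{hws}} = U(\mathfrak{n}_-)\,U(\mathfrak{h})\,U(\mathfrak{n}_+)\ket{\psi_\mathrm{hws}} = U(\mathfrak{n}_-)\ket{\psi_\mathrm{hws}}$ combined with irreducibility. (Minor note: in Eq.~(\ref{Eq:Construction}) the condition $i_\ell \le j_\ell$ selects \emph{raising} operators under the paper's own convention; you correctly worked with $\mathfrak{n}_-$, so the typo is the paper's, not yours.)

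The genuine gap is the one you yourself flagged and deferred: the quantitative bound $k\le\sum_i\kappa_i$. It cannot be closed, because that bound is false as stated. Counterexample: SU$(4)$ with $K=(0,1,0)$, the six-dimensional irrep $\wedge^2\mathds{C}^4$, for which $\sum_i\kappa_i=1$. In the boson realization the hws is $\bigl(a^\dagger_{1,1}a^\dagger_{2,2}-a^\dagger_{2,1}a^\dagger_{1,2}\bigr)\ket{0}$, the basis states correspond to pairs $\{i,j\}\subset\{1,2,3,4\}$, and each lowering operator $c_{i,j}$ moves exactly one boson, so a single lowering reaches only $\{1,3\},\{1,4\},\{2,3\},\{2,4\}$. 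The lowest-weight state $\{3,4\}$ requires two lowerings, since $\Lambda-w_0\Lambda=\alpha_1+2\alpha_2+\alpha_3$ is a sum of no fewer than two positive roots; being the unique state of its weight, it is LI of everything reachable with $k\le 1$. In general the minimal number of lowering operators needed just to \emph{reach} the extreme weight is $\sum_j\min(j,\,n-j)\,\kappa_j$, which coincides with $\sum_j\kappa_j$ only for $n\le 3$ --- presumably how the constant slipped into the paper. A version that is both correct and provable by your own method replaces the constant by the height of $\Lambda-w_0\Lambda$: since $\mathfrak{n}_-$ is generated by the simple lowering operators $c_{\ell+1,\ell}$, the weight space at weight $\mu$ is spanned by words in these of length exactly $\operatorname{ht}(\Lambda-\mu)$, so allowing $k$ up to $\operatorname{ht}(\Lambda-w_0\Lambda)$ spans \emph{every} weight space, including the degenerate ones. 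This also settles the spanning worry you raised at the end, which with the paper's constant $\sum_i\kappa_i$ is indeed unresolvable; none of this affects the paper's algorithms, which keep lowering until exhaustion rather than truncating at $\sum_i\kappa_i$ steps.
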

\noindent Lemma~\ref{Lemma:Cyclic} implies that each basis state can be constructed by linearly combining states obtained on lowering from the hws.
Algorithm~\ref{Algorithm:BasisSet} leverages from the construction of Eq.~(\ref{Eq:Construction}) and from testing linear independence to construct the basis sets.

The correctness of the basis-set algorithm is proved as follows.
We show that each state obtained by enacting any number of lowering operators on the hws is LD on the states returned by the algorithm. 
Each canonical basis state is LD on the states obtained by lowering from the hws in turn, so each canonical basis state is LD on the algorithm output.
The algorithm only constructs states in the correct irrep so Algorithm~\ref{Algorithm:BasisSet} returns a complete basis set at each weight of the irrep on truncation.
\begin{thm}[Algorithm~\ref{Algorithm:BasisSet} is correct]
\label{Theorem:1Correct}
The sets
\begin{align}
	&\left\{\ket{\phi^{(1)}_{\Lambda_1}},\ket{\phi^{(2)}_{\Lambda_1}},\dots, \ket{\phi^{(M(\Lambda_1)}_{\Lambda_1}}\right\},
	\left\{\ket{\phi^{(1)}_{\Lambda_2}},\ket{\phi^{(2)}_{\Lambda_2}},\dots, \ket{\phi^{(M(\Lambda_2))}_{\Lambda_2}}\right\},
	\dots,\nonumber \\
	&\left\{\ket{\phi^{(1)}_{\Lambda_i}},\ket{\phi^{(2)}_{\Lambda_i}},\dots, \ket{\phi^{(M(\Lambda_i))}_{\Lambda_i}}\right\},
	\dots,
	\left\{\ket{\phi^{(1)}_{\Lambda_d}},\ket{\phi^{(2)}_{\Lambda_d}},\dots, \ket{\phi^{(M(\Lambda_d))}_{\Lambda_d}}\right\}\nonumber
\end{align} 
of states returned by Algorithm~\ref{Algorithm:BasisSet} span the respective vertex spaces~$\Psi(\Lambda_i)$~(\ref{Eq:Space}) at each vertex~$\Lambda_i$ of the given irrep $K$. 
\end{thm}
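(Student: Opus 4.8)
The plan is to prove the spanning property by combining Lemma~\ref{Lemma:Cyclic} with a linearity argument showing that the modified BFS never loses a reachable direction when it discards a linearly dependent (LD) state. For each weight $\Lambda_i$ I would introduce the space $R(\Lambda_i)$ spanned by all states $c_{i_k,j_k}\cdots c_{i_1,j_1}\ket{\psi^K_\text{hws}}$ of weight $\Lambda_i$ obtained by applying an arbitrary string of lowering operators to the hws. The theorem then reduces to two facts: that the stored states at $v(\Lambda_i)$ span $R(\Lambda_i)$, and that $R(\Lambda_i)=\Psi(\Lambda_i)$. The latter equality splits into two inclusions. The inclusion $\Psi(\Lambda_i)\subseteq R(\Lambda_i)$ is exactly Lemma~\ref{Lemma:Cyclic}: every canonical basis state is in the span of the states reached by lowering, and since both sides respect the weight grading one may restrict to the component of weight $\Lambda_i$. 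The reverse inclusion $R(\Lambda_i)\subseteq\Psi(\Lambda_i)$ holds because any string of lowering operators maps the hws to a state of irrep $K$ of definite weight, hence to a linear combination of the canonical basis states at that weight.

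To show the stored states span $R(\Lambda_i)$, I would induct on the length $\ell$ of the lowering string producing a reachable state. The base case $\ell=0$ is the hws, which the algorithm enqueues and stores first. For the inductive step, write a length-$\ell$ reachable state as $c_{a,b}\ket{\xi}$ with $\ket{\xi}$ produced by a length-$(\ell-1)$ string; by hypothesis $\ket{\xi}=\sum_j d_j\ket{\phi_j}$ with each $\ket{\phi_j}$ a stored state of the same weight, so by linearity $c_{a,b}\ket{\xi}=\sum_j d_j\,c_{a,b}\ket{\phi_j}$. The crucial point is that every stored state $\ket{\phi_j}$ is eventually dequeued and has all lowering operators applied to it, so each $c_{a,b}\ket{\phi_j}$ is a state the algorithm actually constructs: if it was linearly independent (LI) of the states already held at its vertex it was stored, and otherwise it was by definition in their span. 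Either way $c_{a,b}\ket{\phi_j}$, and hence $c_{a,b}\ket{\xi}$, lies in the span of the stored states at its weight. Termination (Theorem~\ref{Theorem:1Terminats}) guarantees the process completes, so on truncation the stored states at each $v(\Lambda_i)$ span all of $R(\Lambda_i)$.

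The main obstacle is precisely the step buried in that induction: justifying that discarding an LD state never prevents the search from reaching an independent state elsewhere in the graph. The resolution rests entirely on linearity — anything obtainable by lowering from a discarded state $\ket{\chi}=\sum_i c_i\ket{\phi_i}$ equals $\sum_i c_i$ times the states obtained by lowering from the retained states $\ket{\phi_i}$, so no reachable direction is ever lost. I would package this as a loop invariant: after each dequeue, every state reachable from the hws by lowering lies in the span of the currently stored states of its weight. Combining $R(\Lambda_i)=\Psi(\Lambda_i)$ with this invariant yields the conclusion that the output sets span each $\Psi(\Lambda_i)$, and I would close by noting that connectivity of the irrep graph under lowering, again a consequence of Lemma~\ref{Lemma:Cyclic}, ensures every weight $\Lambda_i$ is in fact visited.
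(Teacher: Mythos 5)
Your proposal is correct and follows essentially the same route as the paper's proof: an induction on the length of the lowering string, with the linearity argument that anything lowered from a discarded linearly dependent state lies in the span of what was lowered from the retained states, followed by Lemma~\ref{Lemma:Cyclic} to cover every canonical basis state and the observation that all constructed states remain in the irrep $K$ (your reverse inclusion $R(\Lambda_i)\subseteq\Psi(\Lambda_i)$). Your explicit weight-graded space $R(\Lambda_i)$ and loop-invariant packaging are a slightly cleaner bookkeeping of what the paper argues by contradiction in its inductive step, but they do not constitute a different method.
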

\begin{proof}
We first prove by induction that each state in the form of Eq.~(\ref{Eq:Construction}) is LD on states in the algorithm output.
Our induction hypothesis is that each state 
\begin{equation}
c_{i_{k},j_{k}}\cdots c_{i_{2},j_{2}} c_{i_{1},j_{1}}\ket{\psi_\mathrm{hws}},
\label{Eq:EllState}
\end{equation}
which is obtained by acting $\ell$ lowering operators on the hws, is LD on the states returned by the algorithm $\forall\ell\in\mathds{Z^+}$. 
The proof of the hypothesis follows from mathematical induction over $\ell$.

The induction hypothesis is true for base case $k = 1$.
In the first iteration, the algorithm enacts all the lowering operators on the hws~(Algorithm~\ref{Algorithm:BasisSet} Line \ref{AlgLin:Lowering}) and saves each of the obtained states.
No $k = 1$ state~(\ref{Eq:EllState}) is omitted because the vertices neighbouring the hws vertex are all being explored for the first time.
Hence, all the states that can be reached by lowering once from the hws are added to currentStateQueue and, eventually, to stateList.

Assume that the induction hypothesis holds for $k = \ell$, i.e., each $k=\ell$ state is LD on the states in stateList.
We prove that the hypothesis holds for $k = \ell + 1$ by contradiction. 
Suppose there exists a state that can be reached by enacting $\ell+1$ lowering operators on the hws but is LI of stateList.
Let $\ket{\psi} = c_{i_{\ell+1},j_{\ell+1}}c_{i_{\ell},j_{\ell}}\cdots c_{i_{2},j_{2}} c_{i_{1},j_{1}}\ket{\psi_\mathrm{hws}}$ be such a state.

Consider now the state $\ket{\varphi} = c_{i_{\ell},j_{\ell}}\cdots c_{i_{2},j_{2}} c_{i_{1},j_{1}}\ket{\psi_\mathrm{hws}}$ obtained by enacting one less lowering operation from the hws; i.e., $\ket{\psi} = c_{i_{\ell+1},j_{\ell+1}}\ket{\varphi}$.
We have assumed that the induction hypothesis holds for $k = \ell$.
Therefore, $\ket{\varphi}$ is LD on the states constructed by a algorithm.
In other words,
\begin{equation}
\big|{\varphi}\big\rangle = \sum_{j=1}^{J} a_j\ket{\phi_j}
\end{equation}
is LD on the stateList elements $\left\{\ket{\phi_j}:j\in\{1,2,\dots,J\}\right\}$ for complex numbers $a_j$.

The algorithm enacts the lowering operator $c_{i_{\ell+1},j_{\ell+1}}$ on each $\ket{\phi_j}$ and the resulting states are either stored in stateList or are LD on elements in stateList.
Therefore, the elements of the  set $\{c_{i_{\ell+1},j_{\ell+1}}\ket{\phi_j}: j \in\{1,2,\dots,J\}\}$ are LD on the elements of stateList.
Hence, the element $c_{i_{\ell+1},j_{\ell+1}}\ket{\varphi}$ is also LD on the elements of stateList.
This dependence contradicts the supposition that $\ket{\psi} = c_{i_{\ell+1},j_{\ell+1}}\ket{\varphi}$ is LI of stateList, thereby proving the induction hypothesis for $k = \ell+1$.

The induction hypothesis is true for $\ell = 1$ and is shown to hold for $k = \ell + 1$ if it holds for $k = \ell$.
Thus, our induction hypothesis is true for all $\ell\in \mathds{Z}^+$.
Every state obtained of irrep $K$ obtained by lowering from the hws is linearly dependent (LD) on the basis sets that are returned by the algorithm.

We know from Lemma~\ref{Lemma:Cyclic} that each canonical basis state is LD on the states obtained by lowering. 
Hence, each canonical basis state is LD on the states obtained at the output of the algorithm.
Therefore, the state returned by the algorithm span the space of irrep $K$ states, and the output basis sets span the set of all basis states of the given irrep $K$. 
\end{proof}

We have proved that Algorithm~\ref{Algorithm:BasisSet} terminates and that it returns the correct basis sets on termination.
Now we present our algorithm for the construction of the canonical basis states.
Furthermore, we prove the correctness and termination of the canonical-basis-states algorithm.

\subsection{Canonical-basis-states algorithm (Algorithm~\ref{Algorithm:Main})}
\label{Subsec:BasisStates}
The algorithm for constructing the canonical basis-states of SU$(n)$ requires inputs $n \in \mathds{Z}^+$ and the irrep label $K$.
The algorithm returns expressions for all the canonical basis states in the given irrep.
Figure~\ref{Figure:Main} illustrates SU$(3)$ basis-state construction using our algorithm.
Algorithm~\ref{Algorithm:Main} details the step-by-step construction of the canonical basis states.

\begin{figure}[h]
\includegraphics[width=5.5in]{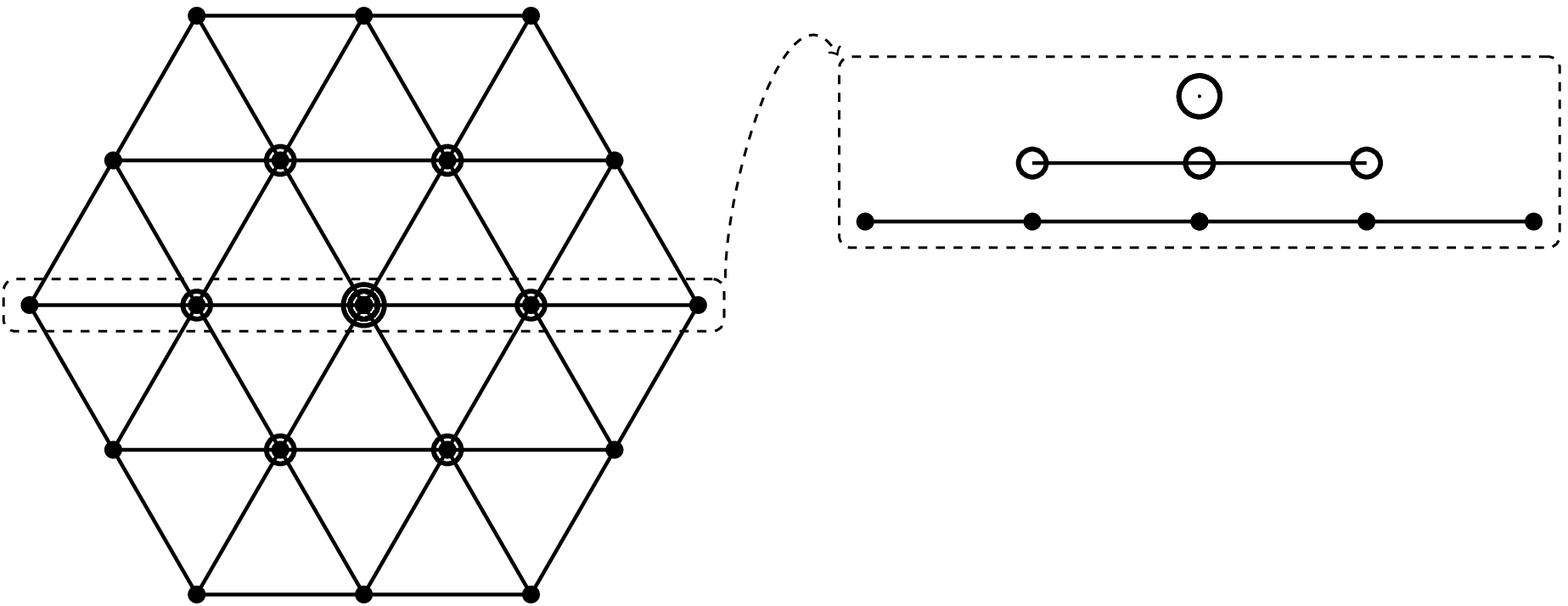}
\caption{Diagrammatic representation of the main algorithm for $n=3$.
The dots and circles represent the canonical basis states.
The dimension of the space of states at a given vertex is the sum of the number of dots and the number of circles at the vertex, for instance weights associated with dimension two are represented by one dot and one circle.
The lines connecting the dots represent the transformation from states of one weight to those of another by the action of SU$(3)$ raising and lowering operators. 
We use Algorithm~\ref{Algorithm:BasisSet} to construct basis sets for each vertex in the SU$(n)$ irrep graph. 
Once the basis sets for the SU$(n)$ irreps are computed, the algorithm enacts the $\mathfrak{su}(n-1)$ raising operators on the $(n-1)$-dimensional sub-irreps to find the $\mathfrak{su}(n-1)$ hws.
Then the algorithm starts with the $\mathfrak{su}(n-1)$ hws and employs the basis-set construction (Algorithm 1) to find all the states in the $\mathfrak{su}(n-1)$ irrep labelled by the hws.
The states thus obtained are subtracted from the set of $\mathfrak{su}(n)$ states.
A new state is chosen from the weight of highest multiplicity and the process repeated until all the $\mathfrak{su}(n-1)$ irreps are found.
}%
\label{Figure:Main}%
\end{figure}

\begin{figure}[h]
\linespread{1}
\begin{algorithm}[H]
\caption{Canonical-basis-states algorithm}\label{Algorithm:Main}
	\begin{algorithmic}[1]
	\Require{
	\Statex
	\begin{itemize}   	
  	\item
  	$n \in \mathds{Z}^+$ \Comment{Algorithm constructs basis sets of $\mathfrak{su}(n)$ algebra.}
		\item 
		$K = (\kappa_1,\kappa_2,\dots,\kappa_{n-1})\in \left(\mathds{Z}^+ \cup \{0\}\right)^{n-1}$ \Comment{Label of SU$(n)$ irrep.}
	\end{itemize} 
	}
	\Ensure{
	\Statex
	\begin{itemize}   	
		\item $\left\{\left(\Big|{\tensor*{\psi}{*^{K^{(n)}}_{\Lambda^{(n)}}^{,\dots,}_{,\dots,}^{K^{(3)},}_{\Lambda^{(3)},}^{K^{(2)}}_{\Lambda^{(2)}}}}\Big\rangle;K^{(n)},\dots,K^{(2)};\Lambda^{(n)},\dots,\Lambda^{(2)}\right)\right\}$\Comment{List of all canonical basis states and weight labels in the irrep $K^{(n)}=K$.}
	\end{itemize} 
	}
	\Statex
	\Procedure{CanonicalBasisStates}{$n$, $K$}
	\State Initialize empty basisStatesList, hws $\leftarrow \ket{\psi^K_\text{hws}}$ \label{Alglin:hws}
	\State SUmStates, SUnStates $\leftarrow$ {\sc BasisSet}($n$,hws) \label{Alglin:FirstStage}
	\While{SUnStates is not empty}
	\For{$m\in \{n,n-1,\dots,2\}$}
		\State $\Lambda_\text{max} \leftarrow \mathfrak{su}(m)$ weight with highest number of states in SUmStates.
		\State $\ket{\psi^{(m)}_\text{max}}\leftarrow$ arbitrary superposition of states at $\Lambda_\text{max}$ in SUmStates.
		\State \label{AlgLin:Raising}Apply $\mathfrak{su}(m-1)$ raising operators on $\ket{\psi^{(m)}_\text{max}}$; reach $\mathfrak{su}(m-1)$ hws $\ket{\psi^{(m-1)}_\text{hws}}$.
		\State $K^{(m-1)} \leftarrow$ {\sc Weight}$\left(\ket{\psi^{(m-1)}_\text{hws}}\right)$.
		\State SUmStates $\leftarrow$ {\sc BasisSet}$\left(m-1,\ket{\psi^{(m-1)}_\text{hws}}\right)$.
		\If{$m=2$}
		\For{All states $\ket{\psi}$ in SUmStates}
			\State $\{\Lambda^{(n)},\dots,\Lambda^{(2)}\} \leftarrow$ {\sc Weights}$\left(\ket{\psi}\right)$ \Comment{$\mathfrak{su}(m)$ weights $\forall m\le n$.} 
			\State Concatenate $\left(\ket{\psi};K^{(n)},\dots,K^{(2)};\Lambda^{(n)},\dots,\Lambda^{(2)} \right)$ basisStatesList 
			\State Subtract SUmStates from SUnStates.
		\EndFor
		\EndIf\Comment{Else, do nothing.}	 
	\EndFor
	\EndWhile
	\For{All states $\ket{\psi^{(i)}}$ in statelist}
		\State Act $\{C_{1,2},C_{2,3},\dots, C_{n-1,n}\}$ on $\ket{\psi^{(i)}}$ until hws $\ket{\psi_\mathrm{hws}^{(i)}}$ is reached.
		\State $\ket{\psi^{(i)}}\leftarrow \e^{i\phi^{(i)}}\ket{\psi^{(i)}}$ for $\ket{\psi_\mathrm{hws}^{(i)}} = \e^{i\phi^{(i)}} \ket{\psi^K_\text{hws}}$.\label{Alglin:PhaseConvention}
	\EndFor
	\State Return basisStatesList
	\EndProcedure
	\end{algorithmic}
\end{algorithm}  
\end{figure}

The canonical-basis-states algorithm proceeds by partitioning $\mathfrak{su}(n)$ basis sets into $\mathfrak{su}(m)$ basis sets for progressively smaller $m$ over $n-1$ stages. 
In the first stage, the algorithm employs Lemma~\ref{Lemma:hws} to construct the hws of the given irrep $K$ (Algorithm~\ref{Algorithm:Main}, Line \ref{Alglin:hws}).
Algorithm~\ref{Algorithm:BasisSet} is then used to construct the basis sets of the SU$(n)$ irrep of the constructed hws (Line \ref{Alglin:FirstStage}).

By the $(n-m)$-th stage, the algorithm has partitioned the entire $\mathfrak{su}(n)$ space into basis states of the SU$(m+1)$ irreps.
In this stage, each of the $\mathfrak{su}(m+1)$ basis sets is partitioned into $\mathfrak{su}(m)$ basis sets by using $\mathfrak{su}(m)$ operators.
The algorithm searches each SU$(m+1)$ irrep graph for the vertex that has the highest multiplicity.

An arbitrary linear combination of the basis states at this vertex is chosen.
The algorithm then enacts all the raising operators in the $\mathfrak{su}(m)$ subalgebra on this linear combination until the action of each of the raising operators annihilates the state.
The state thus obtained is the hws of an SU$(m)$ irrep, whose label $K^{(m)}$ can be calculated by enacting the Cartan operators on the state.

Next the algorithm performs the basis-set construction algorithm on the $\mathfrak{su}(m)$ hws employing only the $\mathfrak{su}(m)$ lowering operators.
This procedure gives us sets of basis states that belong to the SU$(m)$ irrep $K^{(m)}$.
The irrep $K^{(m)}$ basis sets are stored and are then subtracted from the SU$(m+1)$ states.
The algorithm iteratively (i) starts from the highest multiplicity vertex of SU$(m+1)$ irrep graphs, (ii) constructs a hws by raising, (iii) stores the basis sets of SU$(m)$ irreps corresponding to this hws and (iv) subtracts them from SU$(m+1)$ states until all the states in the $\mathfrak{su}(m+1)$ are partitioned.

At the end of $n-1$ stages, we have a list of basis sets of the SU$(n-1)$ irreps.
We iteratively perform the process of finding basis sets for smaller subgroups until we reach SU$(2)$ basis sets, which are known to have unit multiplicity.
Hence, the algorithm returns the basis states that are eigenvectors of the Cartan operators of all SU$(m)\colon m\le n$ groups.

The relative phases between the basis states are fixed by imposing Eq.~(\ref{Eq:PhaseConvention}).
Each of the constructed basis states is acted upon by the simple raising operators $\{C_{1,2},C_{2,3},\dots,C_{n-1,n}\}$ until the hws is reached. 
The phase of this hws obtained by raising is required to be the same for all basis states.
Our algorithm multiplies each of the basis states by a phase factor (Line~\ref{Alglin:PhaseConvention}) to impose the phase convention~Eq.~(\ref{Eq:PhaseConvention}) and returns the set of canonical basis states.

Now we prove that the canonical basis states algorithm terminates.
The proof of termination uses the facts that the number of basis states is equal to the dimension $\Delta_K$ of the irrep and that each basis state is added to currentStateQueue no more than once. 
\begin{thm}[Algorithm~\ref{Algorithm:Main} terminates]
\label{Theorem:2Terminates}
Algorithm~\ref{Algorithm:Main} terminates after the action of no more than $\Delta_K n(n-1)^2/2$ lowering operators
\end{thm}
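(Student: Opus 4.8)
The plan is to account for every lowering operation by noting that, within Algorithm~\ref{Algorithm:Main}, lowering operators are applied \emph{only} inside calls to the basis-set subroutine (Algorithm~\ref{Algorithm:BasisSet}); the highest-weight search on Line~\ref{AlgLin:Raising} and the phase-fixing step on Line~\ref{Alglin:PhaseConvention} employ raising operators exclusively and contribute nothing to the count. It therefore suffices to bound the total cost of all the \textsc{BasisSet} invocations made over the run of the algorithm, and for each such call Theorem~\ref{Theorem:1Terminats} already supplies a bound: a single call on an $\mathrm{SU}(m)$ irrep of dimension $\delta$ costs at most $\delta\, m(m-1)/2$ lowering operations.

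First I would organize the calls by the level $m$ of the subalgebra whose irrep is being constructed. There is one initial call on the full $\mathfrak{su}(n)$ highest-weight state (Line~\ref{Alglin:FirstStage}) at level $n$, and thereafter each pass through the inner loop produces, for $m\in\{n,n-1,\dots,2\}$, a call \textsc{BasisSet}$(m-1,\cdot)$ that builds an $\mathrm{SU}(m-1)$ sub-irrep. Thus the levels range over $m=n,n-1,\dots,1$, and level $m=1$ is trivial since $1\cdot 0/2=0$.

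The crux is a dimension-partition argument. Because the algorithm subtracts each freshly constructed sub-irrep from the running state list before continuing, no $\mathrm{SU}(m)$ sub-irrep is built twice, and the $\mathrm{SU}(m)$ irreps produced across all iterations constitute the canonical $\mathfrak{su}(n)\supset\cdots\supset\mathfrak{su}(m)$ reduction of the carrier space. Since every canonical basis state lies in exactly one $\mathrm{SU}(m)$ sub-irrep of this reduction, the dimensions of the sub-irreps at level $m$ sum to the total number $\Delta_K$ of basis states (Lemma~\ref{Lemma:Dimension}). Summing the per-call bound $\delta\, m(m-1)/2$ over the sub-irreps at level $m$ then gives a level-$m$ cost of at most $\Delta_K\, m(m-1)/2 \le \Delta_K\, n(n-1)/2$.

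Finally I would add up over the at most $n-1$ levels carrying nonzero cost, obtaining a total of at most $(n-1)\cdot\Delta_K\, n(n-1)/2 = \Delta_K\, n(n-1)^2/2$ lowering operations, as claimed; a sharper summation of $m(m-1)/2$ over $m$ in fact yields the smaller constant $\Delta_K\, n(n^2-1)/6$, so the crude per-level bound already suffices. I expect the main obstacle to be the rigorous justification of the partition claim in the presence of the nested while/for structure: one must confirm that the subtraction step guarantees each sub-irrep is enumerated exactly once and that the branching of the canonical chain assigns each basis state to a unique sub-irrep at every level, so that the level-$m$ dimensions genuinely sum to $\Delta_K$ rather than overcounting.
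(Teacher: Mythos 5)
Your proposal is correct and follows essentially the same route as the paper's proof: both bound the cost stage-by-stage over the $n-1$ subalgebra levels, arguing that the linearly independent states processed at each level number at most $\Delta_K$ and that each is acted on by at most $n(n-1)/2$ lowering operators, giving $\Delta_K\, n(n-1)/2$ per stage and $\Delta_K\, n(n-1)^2/2$ in total (the paper asserts the per-stage count directly via linear independence, while you reach it by summing the per-call bound of Theorem~\ref{Theorem:1Terminats} over the dimension partition into sub-irreps --- a repackaging, not a different argument). Your observation that using $m(m-1)/2$ at level $m$ sharpens the bound to $\Delta_K\, n(n^2-1)/6$ is a valid refinement the paper does not pursue.
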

\begin{proof}
In each of the $n-1$ stages of Algorithm~\ref{Algorithm:Main}, the states that are added to currentStateQueue are LI of each other because of the conditions imposed in the algorithm. 
There are no more LI states in the given SU$(n)$ irrep than the dimension $\Delta_K$ of the irrep space.
Thus, the total number of states that are added to currentQueue in each of the $n-1$ stages is no more than $\Delta_K$.
No more that $n(n-1)/2$ lowering operators are applied on the states that enter currentQueue.
Thus, each stage terminates after the application of $\Delta_K n(n-1)/2$ lowering operations.
Furthermore, the algorithm terminates after $n-1$ stages and the application of no more than $\Delta_K n(n-1)^2/2$ lowering operations.
\end{proof}

Finally, we prove that the canonical-basis-states algorithm returns the correct output when it terminates.
\begin{thm}[Algorithm~\ref{Algorithm:Main} is correct]
\label{Theorem:2Correct}
The SU$(n)$ states
\begin{equation}
\Big|{\tensor*{\psi}{*^{K^{(z)}}_{\Lambda^{(n)}}^{,\dots,}_{,\dots,}^{K^{(3)},}_{\Lambda^{(3)},}^{K^{(2)}}_{\Lambda^{(2)}}}}\Big\rangle
\end{equation}
yielded by Algorithm~\ref{Algorithm:Main} are the canonical states of Definition~\ref{Definition:CanonicalBasisStates}.\end{thm}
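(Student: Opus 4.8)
The plan is to prove, by downward induction on the subalgebra index $m$ running from $n$ to $2$, that after the stage in which Algorithm~\ref{Algorithm:Main} refines the $\mathfrak{su}(m+1)$ basis sets into $\mathfrak{su}(m)$ basis sets, every stored state is a simultaneous eigenstate of the Cartan operators of $\mathfrak{su}(m),\dots,\mathfrak{su}(n)$ with well-defined weights $\Lambda^{(m)},\dots,\Lambda^{(n)}$ and carries well-defined irrep labels $K^{(m)},\dots,K^{(n)}$, and that together these states exhaust the carrier space of $K$. Reading off this statement at the terminal level $m=2$ gives exactly the data required by Definition~\ref{Definition:CanonicalBasisStates}, after which the concluding loop fixes the relative phases so as to satisfy Eq.~(\ref{Eq:PhaseConvention}).

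Two standard facts would anchor the induction. First, the restriction of any SU$(m+1)$ irrep to the canonical SU$(m)$ subgroup is completely reducible, decomposing as a direct sum of SU$(m)$ irreps. Second, this branching is multiplicity-free, so each label $K^{(m)}$ occurs at most once in a given SU$(m+1)$ irrep; this is precisely the property that renders the full chain of labels a non-degenerate labelling and that Definition~\ref{Definition:CanonicalBasisStates} relies upon. I would combine these with Theorem~\ref{Theorem:1Correct}, which guarantees that the basis-set subroutine applied to an $\mathfrak{su}(m)$ hws returns a spanning set for the entire SU$(m)$ irrep built on it.

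For the inductive step I would fix one SU$(m+1)$ irrep, which by hypothesis already carries definite labels down to level $m+1$, and show that the inner loop enumerates its SU$(m)$ constituents correctly. Applying the $\mathfrak{su}(m)$ raising operators to the chosen state terminates, because the representation is finite-dimensional, on a state annihilated by all $\mathfrak{su}(m)$ raising operators, i.e.\ a genuine SU$(m)$ hws whose label $K^{(m)}$ is read from its Cartan eigenvalues. Running the basis-set algorithm from this hws with only the $\mathfrak{su}(m)$ lowering operators then returns, by Theorem~\ref{Theorem:1Correct}, a complete basis for a single SU$(m)$ irrep; this is an $\mathfrak{su}(m)$-invariant subspace, so subtracting it leaves an $\mathfrak{su}(m)$-invariant complement that is again a direct sum of SU$(m)$ irreps.

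The main obstacle I anticipate is justifying that raising an arbitrary superposition at the chosen vertex always lands on the hws of a not-yet-subtracted SU$(m)$ constituent and that iterating skips none. The clean resolution is to observe that the unsubtracted states form the direct sum of exactly the surviving constituents (by complete reducibility together with multiplicity-freeness), that repeated raising of a generic vector supported at that vertex annihilates the lower constituents first and hence terminates on the hws of a surviving constituent, and that this hws is therefore new. Exhaustiveness then follows by matching the accumulated dimension against $\Delta_K$ through Lemma~\ref{Lemma:Dimension}, with termination guaranteed by Theorem~\ref{Theorem:2Terminates}. Finally, the concluding loop raises each labelled state to the global hws through the simple operators $\{C_{1,2},\dots,C_{n-1,n}\}$ and rescales it by a phase so that the overlap in Eq.~(\ref{Eq:PhaseConvention}) is nonnegative; this last step is routine.
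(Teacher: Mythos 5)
Your overall architecture (downward induction on $m$, complete reducibility, Theorem~\ref{Theorem:1Correct} for the spanning property of each subtracted block, and the final phase-fixing via Eq.~(\ref{Eq:PhaseConvention})) parallels the paper's proof, but the load-bearing step of your induction rests on a false premise: the restriction of an SU$(m+1)$ irrep to the canonical SU$(m)$ subgroup is \emph{not} multiplicity-free. For example, the adjoint irrep $K=(1,1)$ of SU$(3)$ decomposes under the canonical $\mathfrak{su}(2)$ as spin $1\oplus\tfrac{1}{2}\oplus\tfrac{1}{2}\oplus 0$, so the label $K^{(2)}=(1)$ occurs twice; such multiplicities are exactly the situation the algorithm's LI tests, max-multiplicity vertex choice and subtraction loop exist to handle. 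What is multiplicity-free is the U$(m+1)\supset\,$U$(m)$ branching: equivalent SU$(m)$ copies are distinguished not by $K^{(m)}$ alone but by $K^{(m)}$ together with the weight data $\Lambda^{(m+1)},\dots,\Lambda^{(n)}$ (the U$(1)$ charges, i.e., boson numbers at the higher sites), which is why Definition~\ref{Definition:CanonicalBasisStates} records both $K^{(m)}$ and $\Lambda^{(m)}$ at every level. Consequently your claims that the non-degeneracy of the labelling follows from multiplicity-freeness of the SU chain, that the hws reached by raising ``is therefore new,'' and that raising ``annihilates the lower constituents first,'' do not go through as stated whenever a branching multiplicity occurs.

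The gap can be closed by the weight argument the paper uses instead. Each raising operator $c_{i,j}$ shifts a weight by a fixed root, so applying a definite sequence of raising operators to the chosen superposition (whose summands share one weight) yields at every step a state with a single well-defined weight; the terminal state is therefore weight-homogeneous and annihilated by all $\mathfrak{su}(m)$ raising operators, hence it lies in the hws space of the single irrep label $K^{(m)}$ equal to that weight. It may be a superposition of hws's of \emph{equivalent} irreps, but this is harmless: lowering from it via Theorem~\ref{Theorem:1Correct} still generates one irreducible SU$(m)$ carrier copy with well-defined $K^{(m)}$, and your subtraction-plus-dimension-count argument (Lemma~\ref{Lemma:Dimension}) then gives exhaustiveness as you outline. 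The paper compresses precisely this into the remark that the raised state is a legitimate hws ``or possibly a linear combination of $\mathfrak{su}(m)$ hws's,'' with uniqueness supplied by the existence of the canonical basis; your proof needs this substitute for the multiplicity-freeness claim, after which the remainder of your induction is sound.
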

\begin{proof}
The theorem holds if the states yielded by Algorithm~\ref{Algorithm:Main} have well defined weights and have well defined irrep labels. 
First we show that the weight of each state in the output of the algorithm is well defined.
Each state in the output is obtained either by enacting lowering operators on the hws or by taking linear combinations of states that have the same weight.
Linear combination of states with the same weights have well defined weights themselves. 
Thus, all the output states have well defined weights for SU$(m)$ irreps for all $2\le m\le n$.

We prove that the states have well defined SU$(m)$ irrep label separately for $m = n$ and for $2\le m\le n-1$.
The correctness of the $\mathfrak{su}(m)$ hws follows from Lemma~\ref{Lemma:hws}.
Every state in the output is a linear combination of states obtained by lowering from the constructed $\mathfrak{su}(m)$ hws.
Thus, every state is in the correct SU$(n)$ irrep $K^{(n)}$.

The algorithm (Line~\ref{AlgLin:Raising}) enacts raising operators on linear combinations of $\mathfrak{su}(m+1)$ basis states at one weight until each of the raising operators annihilates the raised state.
The $\mathfrak{su}(m)$ state thus obtained are legitimate $\mathfrak{su}(m)$ hws's or possibly linear combinations of $\mathfrak{su}(m)$ hws's by construction.
The uniqueness of the hws is guaranteed by the existence of the canonical basis~\cite{Humphreys1972}.
Each of the canonical basis states is obtained by lowering from these $\mathfrak{su}(m)$ hws's using $\mathfrak{su}(m)$ lowering operators and thus have well defined irrep labels for all $2\le m \le n-1$. 

We have shown that the states yielded by the algorithm have well defined values of irrep labels $K^{(m)}$ for $\mathfrak{su}(m)$ algebras for all $\{m:2\le m\le n\}$ and of $\mathfrak{su}(m)$ weights $\Lambda^{(m)}$ for all $\{m:2\le m\le n\}$.
Thus, these states are the canonical SU$(n)$ basis states.
This completes the proof of correctness of Algorithm~\ref{Algorithm:Main}.
\end{proof}

We have proved that Algorithm~\ref{Algorithm:Main} terminates and returns the canonical basis states on termination.
The states constructed by the canonical-basis-states algorithm are employed to compute arbitrary SU$(n)$ $\D$-functions using an algorithm presented in the next subsection.

\subsection{$\D$-function algorithm}
\label{Subsec:D}
\begin{figure}[h]
\linespread{1}
\begin{algorithm}[H]
\caption{$\D$-function Algorithm}\label{Algorithm:D}
	\begin{algorithmic}[1]
	\Require{
	\Statex
	\begin{itemize}  	
  	\item
  	$n \in \mathds{Z}^+$ \Comment{Algorithm constructs $\D$-functions of SU$(n)$ elements.}
		\item
		$\Omega = \{\omega_1,\omega_2,\dots,\omega_{n^2-1}\} \in \mathds{R}^{n^2-1}$ \Comment{Parametrization of SU$(n)$ transformation.} 
		\item
		$K^{(n)},\dots,K^{(2)}$ and ${\Lambda^{(n)},\dots,\Lambda^{(2)}}$ \Comment{Row Label.}
		\item
		$K^{\prime(n)},\dots,K^{\prime(2)}$ and ${\Lambda'^{(n)},\dots,\Lambda'^{(2)}}$ \Comment{Column Label.}
	\end{itemize} 
	}
	\Ensure{
	\Statex
	\begin{itemize}   	
		\item $\tensor*{\D}{*^{K^{(n)}}_{\Lambda^{(n)}}^{,\dots,}_{,\dots,}^{K^{(3)},}_{\Lambda^{(3)},}^{K^{(2)}}_{\Lambda^{(2)}}^;_;^{K^{\prime(n)}}_{\Lambda^{\prime(n)}}^{,\dots,}_{,\dots,}^{K^{\prime(3)},}_{\Lambda^{\prime(3)},}^{K^{\prime(2)}}_{\Lambda^{\prime(2)}}}(\Omega)$	
	\end{itemize} 
	}
	\Statex
	\Procedure{D}{$n,\Omega, K^{(n)},\dots,K^{(2)}, K'^{(n)},\dots,K'^{(2)}, {\Lambda^{(n)},\dots,\Lambda^{(2)}},\Lambda^{\prime (n)},\dots,\Lambda^{\prime (2)}$}
	\State Construct $V \in GL(n,\mathds{C})$ from parametrization $\Omega$~\cite{Reck1994}
	\If{$K^{(n)} = K^{\prime(n)}$}
	\State $\Big|{\tensor*{\psi}{*^{K^{(n)}}_{\Lambda^{(n)}}^{,\dots,}_{,\dots,}^{K^{(3)},}_{\Lambda^{(3)},}^{K^{(2)}}_{\Lambda^{(2)}}}}\Big\rangle \leftarrow$ {using \sc CanonicalBasisStates}($n,K(n)$).	
	\State $\Big|{\tensor*{\psi}{*^{K^{\prime(n)}}_{\Lambda^{\prime(n)}}^{,\dots,}_{,\dots,}^{K^{\prime(3)},}_{\Lambda^{\prime(3)},}^{K^{\prime(2)}}_{\Lambda^{\prime(2)}}}}\Big\rangle \leftarrow$ using {\sc CanonicalBasisStates}($n,K'(n)$).	
	\State Construct $\Big\langle{\tensor*{\psi}{*^{K^{(n)}}_{\Lambda^{(n)}}^{,\dots,}_{,\dots,}^{K^{(3)},}_{\Lambda^{(3)},}^{K^{(2)}}_{\Lambda^{(2)}}}}\Big|$ from $\Big|{\tensor*{\psi}{*^{K^{(n)}}_{\Lambda^{(n)}}^{,\dots,}_{,\dots,}^{K^{(3)},}_{\Lambda^{(3)},}^{K^{(2)}}_{\Lambda^{(2)}}}}\Big\rangle$ by complex conjugation.
	\State Construct $V(\Omega)\Big|{\tensor*{\psi}{*^{K^{\prime(n)}}_{\Lambda^{\prime(n)}}^{,\dots,}_{,\dots,}^{K^{\prime(3)},}_{\Lambda^{\prime(3)},}^{K^{\prime(2)}}_{\Lambda^{\prime(2)}}}}\Big\rangle$ using $a^\dagger_{i,k} \to \sum_j V_{i,j}(\Omega) a^\dagger_{j,k}\,\forall\,a_{i,k},a^\dagger_{i,k}$.	
	\State Return $\D =\Big\langle{\tensor*{\psi}{*^{K^{(n)}}_{\Lambda^{(n)}}^{,\dots,}_{,\dots,}^{K^{(3)},}_{\Lambda^{(3)},}^{K^{(2)}}_{\Lambda^{(2)}}}}\Big|V(\Omega) \Big|{\tensor*{\psi}{*^{K^{\prime(n)}}_{\Lambda^{\prime(n)}}^{,\dots,}_{,\dots,}^{K^{\prime(3)},}_{\Lambda^{\prime(3)},}^{K^{\prime(2)}}_{\Lambda^{\prime(2)}}}}\Big\rangle$
	\Else
	\State Return $\D = 0$
	\EndIf
	\EndProcedure

	\end{algorithmic}
\end{algorithm} 
\end{figure}

Our task is to construct the $\D$-function
\begin{equation}
\tensor*{\D}{*^{K^{(n)}}_{\Lambda^{(n)}}^{,\dots,}_{,\dots,}^{K^{(3)},}_{\Lambda^{(3)},}^{K^{(2)}}_{\Lambda^{(2)}}^;_;^{K^{\prime(n)}}_{\Lambda^{\prime(n)}}^{,\dots,}_{,\dots,}^{K^{\prime(3)},}_{\Lambda^{\prime(3)},}^{K^{\prime(2)}}_{\Lambda^{\prime(2)}}}(\Omega) 
\label{Eq:DDef2}
\end{equation}
for given labels $\{K^{(m)}\}, \{\Lambda^{(m)}\}, \{K'^{(m)}\}, \{\Lambda'^{(m)}\}$ of the SU$(n)$ element $V(\Omega)$ given by the parametrization $\Omega$. 
The $\D$-function~(\ref{Eq:DDef2}) is computed as the inner product between the state
\begin{equation}
\Big|{\tensor*{\psi}{*^{K^{(n)}}_{\Lambda^{(n)}}^{,\dots,}_{,\dots,}^{K^{(3)},}_{\Lambda^{(3)},}^{K^{(2)}}_{\Lambda^{(2)}}}}\Big\rangle
\end{equation}
of Eq.~(\ref{Eq:States}) and the transformed state
\begin{equation}
V(\Omega)\Big|{\tensor*{\psi}{*^{K^{\prime(n)}}_{\Lambda^{\prime(n)}}^{,\dots,}_{,\dots,}^{K^{\prime(3)},}_{\Lambda^{\prime(3)},}^{K^{\prime(2)}}_{\Lambda^{\prime(2)}}}\Big\rangle}.
\end{equation}
Algorithm~\ref{Algorithm:D} constructs the fundamental representation, i.e., the $n\times n$ matrix, $V_{ij}$ of the SU$(n)$ element $V(\Omega)$~\cite{Reck1994}.
Then, the expressions for the basis states
\begin{equation}
\Big|{\tensor*{\psi}{*^{K^{(n)}}_{\Lambda^{(n)}}^{,\dots,}_{,\dots,}^{K^{(3)},}_{\Lambda^{(3)},}^{K^{(2)}}_{\Lambda^{(2)}}}}\Big\rangle, \Big|{\tensor*{\psi}{*^{K^{\prime(n)}}_{\Lambda^{\prime(n)}}^{,\dots,}_{,\dots,}^{K^{\prime(3)},}_{\Lambda^{\prime(3)},}^{K^{\prime(2)}}_{\Lambda^{\prime(2)}}}\Big\rangle
}
\label{Eq:States}
\end{equation}
corresponding to the given labels are computed using the canonical-basis-states algorithm.
The basis states thus obtained are expressed as summations over products of creation and annihilation operators.
$V(\Omega)$ acts on the boson realization by transforming each boson independently according to
\begin{equation}
a^\dagger_{i,j} \rightarrow a^{\dagger\prime}_{i,j} = \sum_{k} V_{ik}(\Omega)a^\dagger_{k,j},
\label{Eq:TransformationU}
\end{equation}
where $\{V_{ik}(\Omega)\}$ are the matrix elements of the $n\times n$ representation of $V(\Omega)$.
The algorithm transforms the second basis state of Eq.~(\ref{Eq:States}) under the action of $V(\Omega)$ by replacing each of the the creation and annihilation operators of the state according to Eq.~(\ref{Eq:TransformationU}).

The $\D$-function is evaluated as the inner product using the commutation relations~(\ref{Eq:ccr}) or equivalently by using the Wick's theorem~\cite{Bogoljubov1959}.
The correctness and termination of Algorithm~\ref{Algorithm:D} follows directly from Theorems~\ref{Theorem:2Terminates} and~\ref{Theorem:2Correct}, which completes our algorithms for the computation of boson realizations of SU$(n)$ states and of $\D$-functions.
 
\section{Conclusion}
\label{Section:Conclusion}

In summary, we have devised an algorithm to compute expressions for boson realizations of the canonical basis states of SU$(n)$ irreps.
Boson realizations are ideally suited for analyzing the physics of single photons, providing a tractable interpretation to basis states as multiphoton states, and to transformations on these states as optical transformations.
Furthermore, we have devised an algorithm to compute expressions for SU$(n)$ $\mathcal{D}$-functions in terms of elements of the fundamental representation of the group.
Our algorithm offers significant advantage over competing algorithms to construct $\mathcal{D}$-functions.
Furthermore, our $\D$-function algorithm lays the groundwork for generalizing the analysis of optical interferometry beyond the three-photon level~\cite{Tan2013,Guise2014,Tillmann2014}.

This work is the first known application of graph-theoretic algorithms to SU$(n)$ representation theory.
We overcome the problem of SU$(n)$ weight multiplicity greater than unity by modifying the breadth-first graph-search algorithm.
Our procedure for generating a basis set can be extended to subgroups of SU$(n)$. 
In particular, the boson realization of the hws of O$(2k)$ and O$(2k+1)$ irreps can be constructed along the lines of Lemma~\ref{Lemma:hws}~\cite{Pang1967,Wong1969,Lohe1971}.
Graph-search algorithms can be employed to construct O$(n)$ basis states and $\D$-functions if the problem of labelling O$(n)$ basis states can be overcome.
Our approach opens the possibility of exploiting the diverse graph-algorithms toolkit for solving problems in representation theory of Lie groups.

\section*{Acknowledgements}
We thank Sandeep K.~Goyal, Abdullah Khalid and Joe Repka for comments.
ID and BCS acknowledge AITF, China Thousand Talents Plan, NSERC and USARO for financial support.
The work of HdeG is supported in part by NSERC.

\appendix
\section{Choice of subalgebra chain}
\label{Appendix:SubAlgebraChoice}
Our algorithms construct canonical basis states that reduce the subalgebra chain~(\ref{Eq:SubalgebraChain}).
Other $\mathfrak{su}(n)\supset\mathfrak{su}(n-1)\supset\dots\supset\mathfrak{su}(2)$ subalgebra chains are possible and our algorithm can be generalized to construct canonical basis states that reduce other chains, as we discuss in this appendix.

Each $\mathfrak{su}(m)$ subalgebra of $\mathfrak{su}(n),\,m<n$ is specified by the sets of raising, lowering and Cartan operators that generate it.
For a given sequence
\begin{equation}
I^{(m)} = \left\{i^{(m)}_1,i^{(m)}_2,\dots, i^{(m)}_{m}\right\}
\end{equation}
of $m$ increasing integers, we can define the corresponding set of raising, lowering and Cartan operators
\begin{align}
&\left\{C_{i_1,i_2},C_{i_1,i_3},\dots ,C_{i_1,i_m},C_{i_2,i_3},\dots ,C_{i_2,i_m},\dots,C_{i_{m-1},i_m}\right\}&\text{(Raising)}\\
&\left\{C_{i_2,i_1},C_{i_3,i_1},\dots ,C_{i_m,i_1},C_{i_3,i_2},\dots ,C_{i_m,i_2},\dots,C_{i_{m},i_{m-1}}\right\}&\text{(Lowering)}\\
&\left\{C_{i_2,i_2}-C_{i_1,i_1},C_{i_3,i_3}-C_{i_2,i_2},\dots,C_{i_m,i_m}-C_{i_{m-1},i_{m-1}}\right\}&\text{(Cartan)}
\end{align}
that generate the algebra.
Thus, each $\mathfrak{su}(n)\supset\mathfrak{su}(n-1)\supset\dots\supset\mathfrak{su}(2)$ subalgebra chain is uniquely specified by the ordered sequences $I^{(m)}\colon m<n$ of integers, where
\begin{align}
\begin{split}
I^{(n-1)} &= \{i^{(n-1)}_1,i^{(n-1)}_2,\dots, i^{(n-1)}_{n-1}\} \subset \{1,2,\dots,n\}\\
I^{(n-2)} &= \{i^{(n-2)}_1,i^{(n-2)}_2,\dots, i^{(n-2)}_{n-1}\} \subset I^{(n-1)}\\
&\dots\\
I^{(m-1)} &= \{i^{(m-1)}_1,i^{(m-1)}_2,\dots, i^{(m-1)}_{m}\} \subset I^{(m)}\\
&\dots\\
I^{(2)} &= \{i^{(2)}_1,i^{(2)}_2\} \subset I^{(3)}.
\label{Eq:Chain}
\end{split}
\end{align}

Consider the example of $\mathfrak{su}(2)$ subalgebras of $\mathfrak{su}(3)$. 
The three subsets
\begin{align}
\{C_{1,2},C_{2,1},C_{1,1}-C_{2,2}\}\label{Eq:A1}\\
\{C_{1,3},C_{3,1},C_{1,1}-C_{3,3}\}\\
\{C_{2,3},C_{3,2},C_{2,2}-C_{3,3}\}\label{Eq:A3}
\end{align}
of the generators $\{C_{i,j}\colon i,j \in \{1,2,3\}\}$ of $\mathfrak{su}(3)$ generate three distinct $\mathfrak{su}(2)$ algebras.
Each of the three subsets~(\ref{Eq:A1})-(\ref{Eq:A3}) can be labelled with a two-element subset of the $\{1,2,3\}$ and can be employed to define canonical basis states of SU$(n)$.
For instance, consider $(\lambda,\kappa) = (1,1)$ irrep of SU$(3)$.
The weight $(\lambda_2,\lambda_1) = (0,0)$ is associated with a two-dimensional space.
We can identify two basis states of this space by specifying the following:
\begin{enumerate}
\item
choice of $\mathfrak{su}(2)$ algebra. For instance $I^{(2)} = \{1,2\}$, which corresponds to the algebra generated by $\{C_{1,2},C_{2,1},C_{1,1}-C_{2,2}\}$,
\item
$\mathfrak{su}_{1,2,3}(3)$ irreps label: $K^{(3)} = (1,1)$, $\mathfrak{su}_{(1,2)}(2)$ irreps label: $K^{(2)} = (0)$ and $(1)$ for the two basis states.
\item
$\mathfrak{su}_{1,2,3}(3)$ weights: $(0,0)$, $\mathfrak{su}_{(1,2)}(2)$ weights: $(0)$.
\end{enumerate}
Another basis set of the $\Lambda =(0,0)$ space of $\mathfrak{su}(3)$ irrep $K = (1,1)$ is specified by choosing a different $\mathfrak{su}(2)$ subalgebra as follows.
\begin{enumerate}
\item
choice of $\mathfrak{su}(2)$ algebra. For instance $I^{(2)} = \{1,3\}$, which corresponds to the algebra generated by $\{C_{1,3},C_{3,1},C_{1,1}-C_{3,3}\}$,
\item
$\mathfrak{su}_{1,2,3}(3)$ irreps label: $K^{(3)} = (1,1)$, $\mathfrak{su}_{(1,3)}(2)$ irreps label: $K^{(2)} = (0)$ and $(1)$ for the two basis states.
\item
$\mathfrak{su}_{1,2,3}(3)$ weights: $(0,0)$, $\mathfrak{su}_{(1,3)}(2)$ weights: $(0)$.
\end{enumerate}
Thus, different choices of subalgebra chain give us different basis states.

In the main text, we have chosen the subalgebra chain~(\ref{Eq:SubalgebraChain}).
Our algorithms can be modified to account for other choices of subalgbra chain by choosing a different set of lowering operators in the basis-set subroutine.
Thus our algorithms can be used to construct states and $\D$-functions in any of the bases that reduce $\mathfrak{su}(m)$ subalgebra chains.

\section{Connection to Gelfand-Tsetlin basis}
\label{Appendix:Connection}
In this appendix, we detail the mapping between our SU$(n)$ basis states and the canonical Gelfand-Tsetlin (GT) basis.
The GT basis identifies each SU$(n)$ irrep with a sequence of $n$ numbers 
\begin{align}
S_n &= (m_{1,n},\dots,m_{n,n})\\
m_{k,n} &\ge m_{k+1,n}~\forall 1\le k\le n-1,
\end{align}
where the first label in the subscript is the sequence index and the second label identifies the algebra.
The carrier space of every $\mathfrak{su}(m)$ subalgebra is composed of disjoint $\mathfrak{su}(m-1)$ carrier spaces
\begin{equation}
\left\{(m_{1,n-1},\dots,m_{n-1,n-1})\right\}
\end{equation}
that obey the betweenness condition
\begin{equation}
m_{k,n}\ge m_{k,n-1}\ge m_{k+1,n}.
\end{equation}
Thus, each $\mathfrak{su}(n)$ basis state $\ket{M}$ can be labelled by the GT pattern
\begin{equation}
\ket{M} \equiv \begin{pmatrix}
\multicolumn{2}{c}{m_{1,N}} & \multicolumn{2}{c}{m_{2,N}} &
\multicolumn{2}{c}{\ldots} & \multicolumn{2}{c}{m_{N,N}} \\
& \multicolumn{2}{c}{m_{1,N-1}} & \multicolumn{2}{c}{\ldots}
& \multicolumn{2}{c}{m_{N-1,N-1}} & \\
&& \ddots &&& \reflectbox{\(\ddots\)} && \\
&& \multicolumn{2}{r}{m_{1,2}} & \multicolumn{2}{l}{m_{2,2}} && \\
&&& \multicolumn{2}{c}{m_{1,1}} &&&
\end{pmatrix},
\end{equation}
where
\begin{equation}
m_{k,\ell}\ge m_{k,n-1}\ge m_{k+1,\ell}\,, \,1\le k < \ell \le n.
\end{equation}

The canonical basis states are eigenstates of the Cartan operators $\{H_i\}$~(\ref{Eq:Cartan}) as detailed in the following lemma.
\begin{lem}[Connection to Gelfand-Tsetlin basis~\cite{Alex2011}] The canonical basis states are connected to the GT basis according to 
\begin{equation}
\label{eq:gtpattern}
\Big|{\tensor*{\psi}{*^{K^{(z)}}_{\Lambda^{(n)}}^{,\dots,}_{,\dots,}^{K^{(3)},}_{\Lambda^{(3)},}^{K^{(2)}}_{\Lambda^{(2)}}}}\Big\rangle = 
 \begin{pmatrix}
\multicolumn{2}{c}{m_{1,N}} & \multicolumn{2}{c}{m_{2,N}} &
\multicolumn{2}{c}{\ldots} & \multicolumn{2}{c}{m_{N,N}} \\
& \multicolumn{2}{c}{m_{1,N-1}} & \multicolumn{2}{c}{\ldots}
& \multicolumn{2}{c}{m_{N-1,N-1}} & \\
&& \ddots &&& \reflectbox{\(\ddots\)} && \\
&& \multicolumn{2}{r}{m_{1,2}} & \multicolumn{2}{l}{m_{2,2}} && \\
&&& \multicolumn{2}{c}{m_{1,1}} &&&
\end{pmatrix}
\end{equation}
Every state $\ket{M}$ in the GT-labeling scheme is a simultaneous eigenstate of all $\mathfrak{su}(n)$ Cartan operators,
\begin{equation}
H_\ell \ket{M} = \lambda^M_\ell \ket{M}, \quad (1 \le \ell \le N-1),
\end{equation}
with eigenvalues
\begin{equation}
\label{eq:jzelement}
\lambda_\ell = \sum_{k=1}^\ell m_{k,\ell} - \frac{1}{2} \left(\sum_{k=1}^{\ell+1} m_{k,\ell+1}+ \sum_{k=1}^{\ell-1} m_{k,\ell-1}\right), \, 1 \leq \ell \leq N-1.
\end{equation}
\end{lem}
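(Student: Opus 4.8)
The plan is to prove the two assertions of the lemma separately: first that the canonical labels and the Gelfand--Tsetlin (GT) patterns carry identical information, establishing the identification in Eq.~(\ref{eq:gtpattern}), and then that the Cartan operators act diagonally on these states with the stated eigenvalues, Eq.~(\ref{eq:jzelement}).

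First I would build the dictionary between the two labelling schemes. The $\ell$-th row $(m_{1,\ell},\dots,m_{\ell,\ell})$ of a GT pattern is a weakly decreasing integer sequence, which I read as a $\mathfrak{u}(\ell)$ irrep label; passing to $\mathfrak{su}(\ell)$ via the differences $\kappa^{(\ell)}_k = m_{k,\ell}-m_{k+1,\ell}$ recovers the irrep label $K^{(\ell)}$. The betweenness condition $m_{k,\ell+1}\ge m_{k,\ell}\ge m_{k+1,\ell+1}$ is exactly the $\mathfrak{su}(\ell)\subset\mathfrak{su}(\ell+1)$ branching rule, so a GT pattern with fixed top row is in bijection with an admissible chain of subalgebra irrep labels $K^{(n)},\dots,K^{(2)}$ together with the residual weight data. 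Since the canonical basis states of Definition~\ref{Definition:CanonicalBasisStates} are, by construction, labelled by precisely these $K^{(m)}$ and the $\mathfrak{su}(m)$ weights $\Lambda^{(m)}$, matching the two label sets yields the correspondence in Eq.~(\ref{eq:gtpattern}). The one subtlety is phase: the canonical states are fixed by the positivity convention~(\ref{Eq:PhaseConvention}), whereas the GT states carry the standard GT phases; I would verify that both conventions make the simple raising operators act with positive matrix elements, so that the two bases coincide as states and not merely span the same vertex spaces.

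Next I would establish the eigenvalue formula. Writing $\sigma_\ell \defeq \sum_{k=1}^{\ell} m_{k,\ell}$ for the $\ell$-th row sum, with $\sigma_0 \defeq 0$, the central fact is that the diagonal generator $C_{\ell,\ell}$ acts on a GT state with eigenvalue $\sigma_\ell - \sigma_{\ell-1}$; that is, the weight component equals the difference of consecutive row sums. I would obtain this either by invoking the defining property of the GT construction or, to keep the argument self-contained within the boson realization, by induction: the hws of Lemma~\ref{Lemma:hws} has its weight read off directly from its determinantal form, and each lowering step $c_{i,j}$ shifts the site occupations by a fixed integer amount, so the eigenvalue of $C_{\ell,\ell}$ is constant across a fixed weight space. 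Given this, the claim follows by linearity: since $H_\ell = C_{\ell,\ell}-C_{\ell+1,\ell+1}$ from Eq.~(\ref{Eq:Cartan}), its eigenvalue is $(\sigma_\ell-\sigma_{\ell-1})-(\sigma_{\ell+1}-\sigma_\ell)$, which rearranges into the expression in Eq.~(\ref{eq:jzelement}) up to the overall factor fixed by the normalization convention $H_1 = 2J_z$.

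The hard part will be this second step---pinning down that $C_{\ell,\ell}$ is diagonal with the row-sum eigenvalue---because it is the one place where the combinatorics of the GT pattern genuinely meets the algebra rather than being a formal relabelling. Everything else is bookkeeping: the branching/betweenness equivalence is standard, and the passage from the $C_{\ell,\ell}$ eigenvalues to the $H_\ell$ eigenvalues is immediate by linearity. I would therefore devote most of the argument to the diagonal action of $C_{\ell,\ell}$, and close by checking the phase compatibility so that the identification in Eq.~(\ref{eq:gtpattern}) is an equality of states.
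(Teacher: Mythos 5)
The paper never proves this lemma: it is imported wholesale from the cited reference (Alex et al.), and the surrounding appendix text only sketches the label dictionary and the boson-number bookkeeping. Your proposal is therefore necessarily a different route --- a genuine proof --- and it is essentially the fleshed-out version of what the paper gestures at: betweenness $\leftrightarrow$ multiplicity-free U$(\ell)\downarrow$U$(\ell-1)$ branching for the identification of labels, and row sums $\leftrightarrow$ site occupations for the eigenvalues. One simplification you can exploit: the step you single out as the hard part (diagonal action of $C_{\ell,\ell}$ with eigenvalue $\sigma_\ell-\sigma_{\ell-1}$) is nearly immediate in the boson realization, since $c_{\ell,\ell}=\sum_k a^\dagger_{\ell,k}a_{\ell,k}$ is the site-$\ell$ number operator, the canonical states have sharp occupations $\nu_\ell$ (the $h_\ell$ eigenvalues together with the fixed total $N_K$ of Eq.~(\ref{Eq:N}) determine every $\nu_\ell$), and $\nu_\ell=\sigma_\ell-\sigma_{\ell-1}$ follows because the $\mathfrak{u}(\ell)$ subalgebra acts only on sites $1,\dots,\ell$ and each boson there contributes exactly one box to the Young diagram labelling row $\ell$, so the row sum equals the boson count on those sites. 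Your phase check against the convention of Eq.~(\ref{Eq:PhaseConvention}) is also the right thing to include, since the lemma asserts equality of states, not merely of one-dimensional subspaces. Finally, your factor-of-two flag is not a loose end but an actual inconsistency in the paper: with $H_\ell=C_{\ell,\ell}-C_{\ell+1,\ell+1}$ as in Eq.~(\ref{Eq:Cartan}), the eigenvalue is $2\sigma_\ell-\sigma_{\ell+1}-\sigma_{\ell-1}$, i.e.\ exactly twice Eq.~(\ref{eq:jzelement}); the stated formula is correct only for the half-normalized Cartan generators $\tfrac{1}{2}H_\ell$ used in the cited reference, so in a final write-up you should state this as an explicit factor-of-2 correction rather than leaving it ``up to an overall factor.''
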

\noindent 
Thus, the canonical basis states of Def.~\ref{Definition:CanonicalBasisStates} is uniquely mapped to the GT basis.

Furthermore, the weights $\lambda_{\ell}$ are also mapped via the boson realizations to differences in number of bosons at sites $\ell$ and $\ell+1$.
Hence, the difference
\begin{equation}
\nu_{\ell+1}- \nu_{\ell} = \sum_{k=1}^l m_{k,\ell} - \frac{1}{2} \left(\sum_{k=1}^{\ell+1} m_{k,\ell+1}+ \sum_{k=1}^{\ell-1} m_{k,\ell-1}\right), \,1 \leq \ell \leq N-1
\end{equation}
in the number of bosons at sites $\ell+1$ and $\ell$ of the boson realization of a basis state is also connected to its GT pattern.
Once we recall the total number of bosons in the system is $\nu_{1}+\nu_{2}+\dots+\nu_{n}=N_k$, one can then invert the differences and recover $\nu_p$ in term of the $m_{k,\ell-1}.$
Thus the canonical GT basis states are connected to our SU$(n)$ basis states.

\bibliography{BosonRealization}
\end{document}